\def\idtt#1{\ensuremath{\mathtt{#1}}}
\newtheorem{fact}{Fact}
\newtheorem{invariant}{Invariant}
\renewenvironment{proof}{\trivlist\item[]\emph{Proof}:}%
{\unskip\nobreak\hskip 1em plus 1fil\nobreak$\Box$
\parfillskip=0pt%
\endtrivlist}
\newenvironment{itemize*}%
  {\begin{itemize}%
    \setlength{\itemsep}{0pt}%
    \setlength{\parskip}{0pt}%
    \setlength{\parsep}{0pt}%
    \setlength{\topsep}{0pt}%
    \setlength{\partopsep}{0pt}%
  }%
  {\end{itemize}}%
\newcommand{\cL}{{\cal L}}
\newcommand{\cR}{{\cal R}}
\newcommand{\cT}{{\cal T}}
\newcommand{\cD}{{\cal D}}
\newcommand{\cM}{{\cal M}}
\newcommand{\cB}{{\cal B}}
\newcommand{\cV}{{\cal V}}
\newcommand{\rank}{\idtt{rank}}
\newcommand{\sindex}{\idtt{index}}
\newcommand{\leftb}{\idtt{left}}
\newcommand{\rightb}{\idtt{right}}
\newcommand{\replace}{\idtt{replace}}
\newcommand{\locate}{\idtt{locate}}
\newcommand{\rng}{\idtt{rng}}
\newcommand{\new}{\idtt{new}}
\newcommand{\sleft}{\idtt{start}}
\newcommand{\sright}{\idtt{end}}
\newcommand{\Rectify}{\idtt{Rectify}}
\newcommand{\eps}{\varepsilon}
\begin{document}

\title{ A Fast Algorithm for Three-Dimensional Layers of Maxima
 Problem}
\author{Yakov Nekrich\thanks{Department of Computer Science, University of Bonn. Email {\tt yasha@cs.uni-bonn.de}.}
}
\institute{}
\maketitle
\begin{abstract}
  We show that the three-dimensional layers-of-maxima problem can be solved 
in $o(n\log n)$ time in the word RAM model. 
  Our  algorithm runs  in $O(n(\log \log n)^3)$ deterministic time or 
$O(n(\log\log n)^2)$ expected time and uses $O(n)$ space.
  We also describe a deterministic  algorithm that uses optimal $O(n)$ space and solves 
  the three-dimensional layers-of-maxima problem in $O(n\log n)$ time 
  in the pointer machine model. 
\end{abstract} 
\thispagestyle{empty}

\section{Introduction}
A point $p$ \emph{dominates} a point $q$ if each coordinate of $p$ 
is larger than or equals to the corresponding coordinate of $q$. 
A point $p$ is a \emph{maximum point} in a set $S$ if no point of $S$ 
dominates $p$. The maxima set of $S$ is the set of all maximum points 
in $S$. 
In the {\em layers-of-maxima} problem we assign points of a set $S$ 
to layers $S_i$, $i\geq 1$,  according to the dominance relation: 
The first layer of $S$ is defined as the maxima set of $S$, the 
layer $2$ of $S$ is the maxima set of $S\setminus S_1$, 
and the $i$-th layer of 
$S$ is the maxima set of $S\setminus (\cup_{j=1}^{i-1} S_j)$. 
In this paper we show that the three-dimensional layers-of-maxima 
problem can be solved in $o(n\log n)$ time. 

\tolerance=1000
{\bf Previous and Related Work.} 
The algorithm of Kung, Luccio, and Preparata~\cite{KLP75} finds the 
maxima set of a set $S$ in 
$O(n\log n)$ time for $d=2$ or $d=3$ dimensions and 
$O(n\log^{d-2} n)$ time for $d\geq 4$ dimensions.
The  algorithm of Gabow, Bentley, and Tarjan~\cite{GBT84} finds the 
maxima set in  $O(n\log^{d-3}n\log\log n)$ time 
for $d\geq 4$ dimensions. 
Very recently, Chan, Larsen, and P\v{a}tra\c{s}cu~\cite{CLP11} described a
randomized  algorithm that  solves  the $d$-dimensional 
maxima problem (i.e., finds the maxima set) for $d\geq 4$ in 
$O(n\log^{d-3}n)$ time.
Numerous works are devoted to variants of the 
maxima problem in 
different computational models and settings:
In~\cite{BF02}, the authors describe a solution for the three-dimensional 
maxima problem in the cache-oblivious model.
Output-sensitive algorithms and algorithms  that find the maxima 
for a random set of points are described in~\cite{BCL90,Cl94,G94,KS85}.
The two-dimensional problem of maintaining the maxima set under insertions
 and deletions 
is considered in~\cite{K94}; the problem of maintaining the maxima set 
for moving points is considered in~\cite{FGT92}.

The general layers-of-maxima problem appears to be more difficult than 
the problem of finding the maxima set.  
The three-dimensional layers-of-maxima problem can be solved in 
$O(n \log n \log \log n)$ time~\cite{A92} using dynamic fractional 
cascading~\cite{MN90}. The algorithm  
of Buchsbaum and Goodrich~\cite{BG04} runs in $O(n\log n)$ time 
and uses $O(n\log n \log \log n)$ space. 
Giyora and Kaplan~\cite{GK09} described a data structure for point 
location in a dynamic set of horizontal segments and showed how  it 
can be  combined with the approach of~\cite{BG04} 
to solve the three-dimensional layers-of-maxima problem 
in $O(n\log n)$ time and $O(n)$ space. 

The $O(n\log n)$ time is optimal even if we want to find the maxima set in 
two dimensions~\cite{KLP75} provided that we work 
in the infinite-precision computation 
model in which  input values, i.e. point coordinates, can be manipulated 
with algebraic operations and compared. 
On the other hand, 
it is well known that it is possible to achieve 
$o(n\log n)$ time (resp.\ $o(\log n)$ time for searching in a data structure) 
for many one-dimensional as well as for some multi-dimensional 
problems and data structures in other computational models. 
For instance, the grid model, that assumes all coordinates to be integers
in the range $[1,U]$ for a parameter $U$, was extensively studied in
 computational geometry. Examples of problems that can be solved efficiently 
in the grid model are orthogonal range reporting queries~\cite{O88} 
and point location queries 
in a  two- and three-dimensional rectangular subdivisons~\cite{BKS95}.
 In fact, we can use standard techniques 
to show that these  queries can be answered in $o(\log n)$ time when all 
coordinates are arbitrary integers. 
Recently, a number of other important geometric problems was shown to be 
solvable 
in $o(n\log n)$ time (resp.\ in $o(\log n)$ time) in the word RAM model.
An incomplete list\footnote{We note that problems in this list are 
more difficult than the layers-of-maxima problem 
because in our case we process a set of 
axis-parallel segments.} includes Voronoi diagrams and three-dimensional convex 
hulls in $O(n\cdot 2^{O(\sqrt{\log \log n})})$ time~\cite{CP07}, 
two-dimensional point location in $O(\log n/\log \log n)$ 
time~\cite{P06,C06}, and dynamic convex 
hull in $O(\log n/\log \log n)$ time~\cite{DP07}. 
Results for the word RAM model are important because they help us better 
understand 
the structure and relative complexity of different problems and demonstrate 
how geometric information can be analyzed in algorithmically useful ways.

{\bf Our Results.}
In this paper we show that the three-dimensional layers-of-maxima problem 
can be solved in $O(n (\log \log n)^3)$ deterministic 
time and $O(n)$ space in the word RAM 
model. If randomization is allowed, our algorithm runs in 
$O(n(\log\log n)^2)$ expected time. 
For comparison, the fastest 
known deterministic linear space sorting algorithm runs 
in $O(n\log \log n)$ time~\cite{H04}. 
Our result is valid in the word 
RAM computation model, but the time-consuming operations, such as 
multiplications,  are only used during the pre-processing step when we sort 
points by coordinates (see section~\ref{sec:over}). For instance, 
if all points are on the $n\times n\times n$ grid, then our algorithm 
uses exactly the same model as~\cite{O88} or~\cite{BKS95}.

We also describe an algorithm 
that uses $O(n)$ space and solves the three-dimensional layers-of-maxima 
problem in optimal $O(n\log n)$ time in the pointer machine model~\cite{T79}. 
The result of Giyora and Kaplan~\cite{GK09} that achieved the 
same space and time bounds is valid only in the RAM model. 
Thus we present  the first algorithm that solves the three-dimensional
 layers-of-maxima problem in optimal time and space in the pointer 
machine model. 

{\bf Overview.}
Our solution, as well as the previous results,  is based on the sweep
 plane algorithm of~\cite{BG04} described 
in section~\ref{sec:over}. The sweep plane algorithm assigns points 
to layers by answering for each  $p\in S$ a point location query in a 
dynamically maintained staircase subdivision. 
We observe that general data structures for point location in a  
set of horizontal segments cannot be used to obtain an $o(n\log n)$ time
 solution. Even in the word RAM model, no dynamic data structure that
 supports both queries and updates in $o(\log n)$ time is known. 
Moreover, by the lower bound 
of~\cite{AHR98} any data structure for a dynamic set of horizontal segments 
needs $\Omega(\log n/\log \log n)$ time to answer a point location 
query. We achieve a significantly better result using the methods 
described below.

In section~\ref{sec:fast} we describe the data structure for point location 
in a staircase subdivision that supports queries in 
$O((\log \log n)^3)$ time and updates\footnote{
We will describe update operations supported by our data structure in 
sections~\ref{sec:over} and~\ref{sec:fast}.} 
in poly-logarithmic time per segment. This result may be of interest on its 
own.

The data structure of section~\ref{sec:fast} is not sufficient to obtain 
the desired runtime and space usage mainly due to high costs of  update 
operations. To reduce the update time and space usage,  we construct 
auxiliary staircases $\cB_i$, such that: 1. the total number 
of segments in $\cB_i$ and the total number of updates is  $O(n/d)$ 
for a parameter $d=\log^{O(1)} n$; 
2. locating a point $p$ among staircases $\cB_i$ gives us an approximate 
location of $p$ among the original staircases $\cM_i$ (up to $O(d)$ 
staircases).
An efficient method 
for maintaining staircases $\cB_i$, described in section~\ref{sec:stair}, 
is the most technically challenging part of our construction.
In section~\ref{sec:fin} we show how the data structure of
 section~\ref{sec:fast} can be combined with the auxiliary staircases 
approach to obtain an $O(n(\log \log n)^3)$ time algorithm. 
We also sketch how the same approach enables us to obtain 
an $O(n\log n)$ time and $O(n)$ space algorithm in the pointer machine model.


\section{Sweep Plane Algorithm}
\label{sec:over}
Our algorithm is based on the three-dimensional sweep method that is also 
used in ~\cite{BG04}. We move the plane parallel to the $xy$ plane
\footnote{We assume that all points have positive coordinates.} from 
$z=+\infty$ to
 $z=0$ and
 maintain the following invariant: when the $z$-coordinate of the plane 
equals $v$ all points $p$ with $p.z\geq v$ are assigned to their layers of
maxima. 
Here and further $p.x$, $p.y$, and $p.z$ denote the $x$,- $y$-, and
 $z$-coordinates of a point $p$.
Let $S_i(v)$ be the set of points $q$ that belong to the $i$-th layer of
maxima 
such that $q.z>v$; let $P_i(v)$ denote the projection of $S_i(v)$ on the 
sweep plane, $P_i(v)=\{\pi(p)\,|\, p\in S_i(v)\}$ where $\pi(p)$ denotes the 
projection of a point $p$ on the $xy$-plane. For each value of $v$ maximal 
points of  $P_i(v)$  form a staircase $\cM_i$; see Fig.~\ref{fig:sweep}. 
When the $z$-coordinate of the sweep plane is changed from  $v+1$ to $v$, 
we assign all points with $p.z=v$  to their layers of maxima. 
If $\pi(p)$, such that $p.z = v$,  
is dominated by a point from $P_i(v+1)$, then $p$ belongs to the $j$-th 
layer of maxima and $j>i$. If $\pi(p)$, such that $p.z=v$, dominates a point 
on $P_k(v+1)$, then $p$ belongs to the $j$-th layer of maxima and $j\leq k$.
We observe that $\pi(p)$ dominates $P_i(v+1)$ if and only if 
the staircase $\cM_i$ is dominated by  $p$, 
i.e., the vertical ray shot from $p$ in $-y$ direction passes through $\cM_i$.  
Hence, the point $p$ belongs to the layer $i$, such that 
$\pi(p)$ is between the staircase $\cM_{i-1}$ and the staircase $\cM_i$.
This means that we can assign a point to its layer by answering a point 
location query in a staircase subdivision. 
When all $p$ with $p.z=v$ are assigned to their layers, staircases
 are updated.

Thus to solve the layers of maxima problem, we examine points in the 
descending order of their $z$-coordinates.
For each $v$, such that there is at least one $p$ with $p.z=v$, we proceed 
as follows: for every $p$ with $p.z=v$  operation $\locate(p)$ identifies the 
staircase $\cM_i$ 
immediately below $\pi(p)$. If the first staircase below $\pi(p)$ has index 
$i$ ($\pi(p)$ may also lie on $\cM_i$), then 
$p$ is assigned to the $i$-th layer of maxima; if $\pi(p)$ 
is below the lowest staircase $\cM_{j}$, then $p$ is assigned to the  
new layer $j+1$. 
When all points with $p.z=v$ are assigned to their layers, the staircases 
are updated. All points $p$ such that $p.z=v$ are examined 
in the ascending order of their $x$-coordinates. 
If a point $p$ with $p.z=v$ is assigned to layer $i$, 
we perform operation $\replace(p,i)$ that removes all 
points of $\cM_i$ dominated by $p$ and inserts $p$ into $\cM_i$.
If the staircase $i$ does not exist, then instead of $\replace(p,i)$ 
we perform the operation $\new(p,i)$; $\new(p,i)$ 
creates a new staircase $\cM_i$ that consists of one horizontal segment $h$ 
with left endpoint $(0,p.y)$ and right endpoint $\pi(p)$ and one vertical
segment $t$ with upper endpoint $\pi(p)$ and lower endpoint $(0,p.x)$.
See Fig.~\ref{fig:sweep} for an example. 

\begin{figure}[tbh]
  \centering
  \begin{tabular}{ccc}
  \includegraphics[width=.45\textwidth]{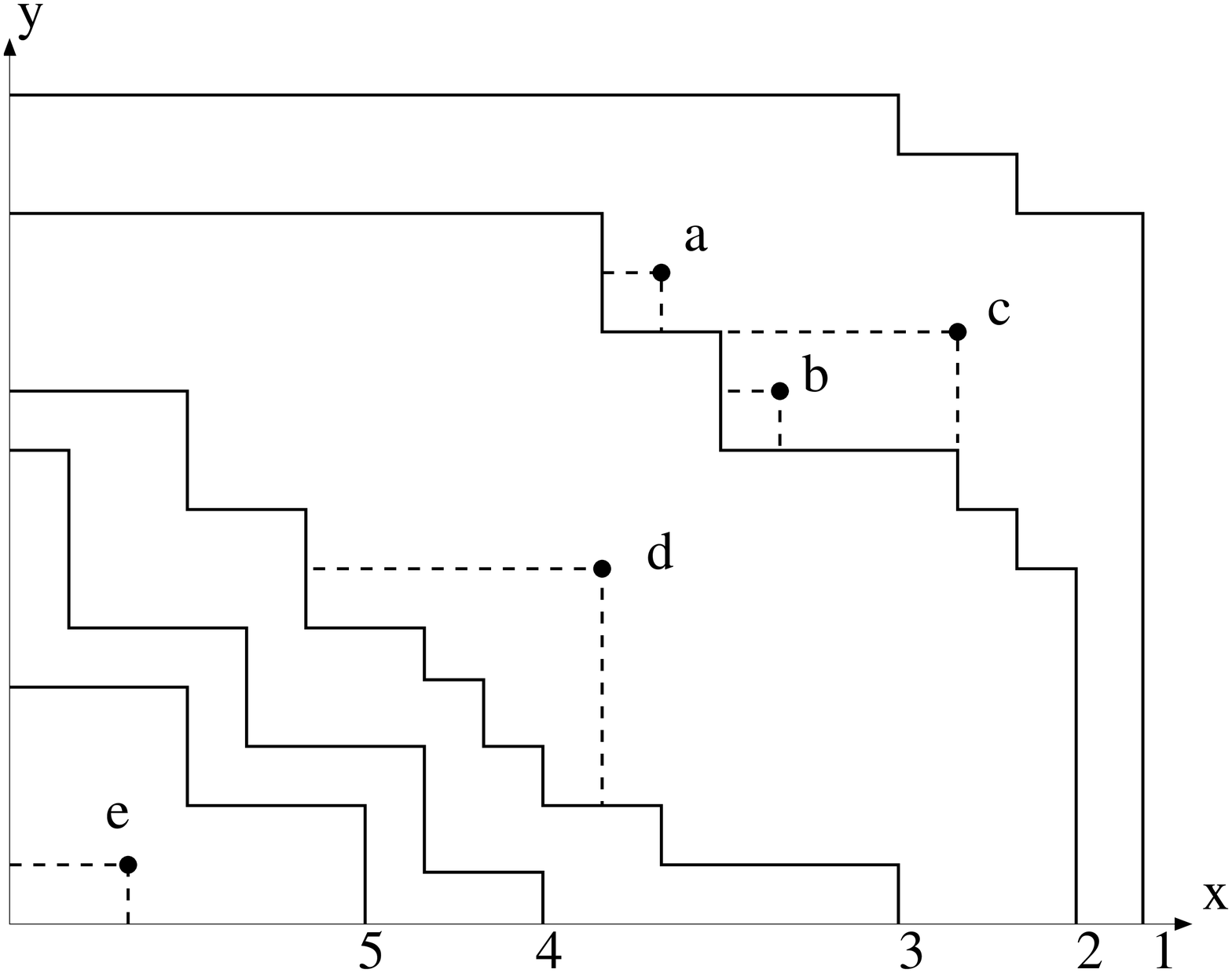} & \hspace*{.7cm} &
  \includegraphics[width=.45\textwidth]{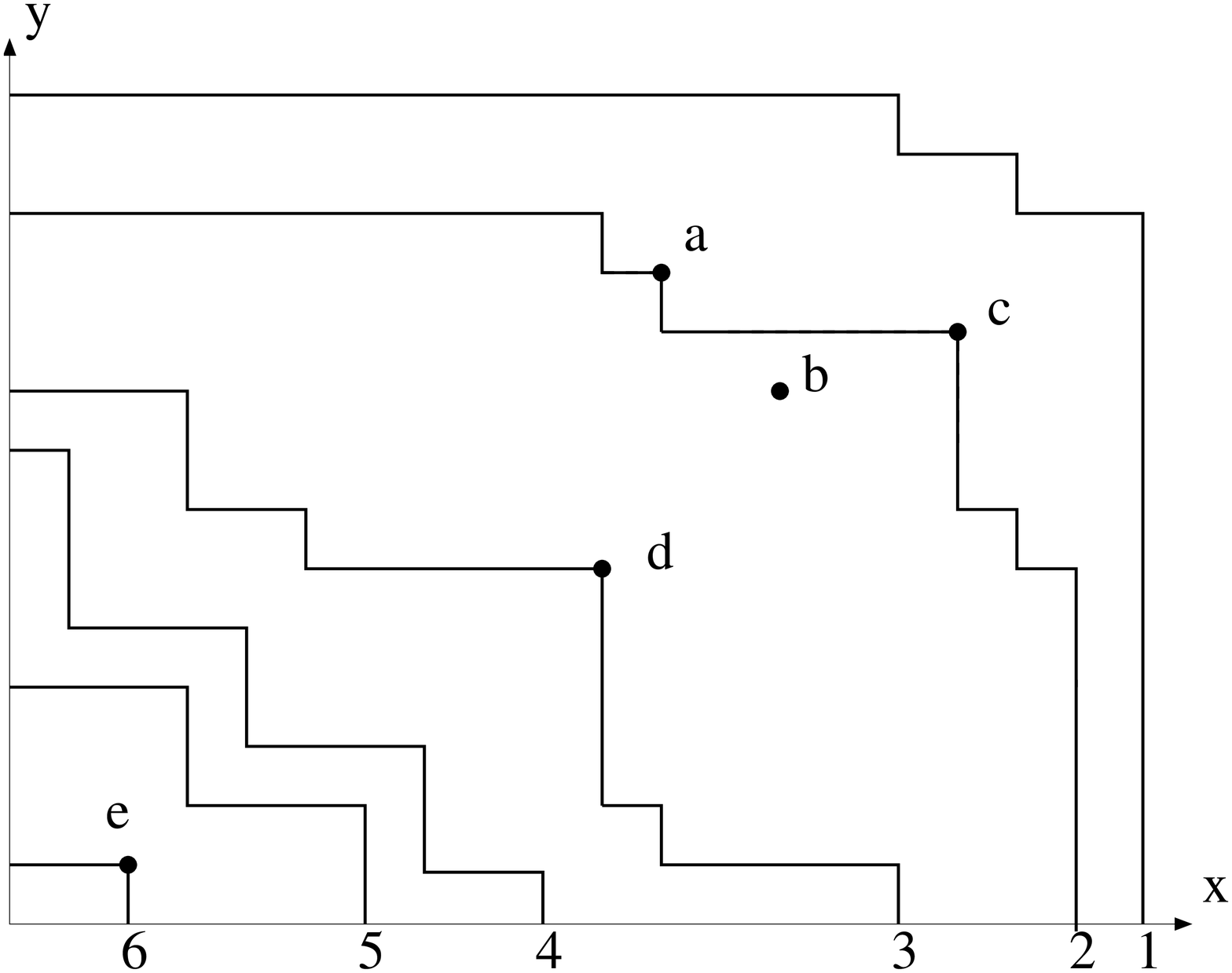} \\
  {\bf (a)}   &  & {\bf(b)}  \\
  \end{tabular}
  \caption{Points $a$, $b$, $c$, $d$, and $e$ have the same $z$-coordinate. {\bf (a)} Points $a$, $b$ and $c$ are assigned to layer $2$, $d$ is assigned to layer $3$, and $e$ is assigned to a new layer $6$. 
{\bf (b)} Staircases after operations $\replace(a,2)$, $\replace(b,2)$, $\replace(c,2)$, $\replace(d,3)$, and $\new(e)$. Observe that $b$ is not the endpoint of a segment in the staircase $\cM_2$ after updates.
     }
  \label{fig:sweep}
\end{figure}

We can reduce the general layers of maxima problem to the problem in the universe of size $O(n)$ using the reduction to rank space
 technique~\cite{O88,GBT84}. 
The rank of an element $e\in S$ is defined as the number of elements in $S$ 
that are smaller than $e$: $\rank(e,S)=|\{a\in S\,|\, a<e \}|$; clearly, 
$\rank(e,S)\leq |S|$. 
For a point $p=(p.x,p.y,p.z)$, $p\in S$, 
 let $\tau(p)=(\rank(p.x,S_x)+1,\rank(p.y,S_y)+1,\rank(p.z,S_z)+1)$. 
Let $S'=\{\tau(p)\,|\, p\in S\}$. Coordinates of all points in $S'$ 
belong to range $[1,n]$. 
A point $p$ dominates a point $q$ if and only if 
$\rank(p.x,S_x)\geq \rank(q.x,S_x)$, $\rank(p.y,S_y)\geq \rank(q.y,S_y)$, and 
$\rank(p.z,S_z)\geq \rank(q.z,S_z)$ where $S_x$, $S_y$, $S_z$ are sets 
of $x$-, $y$-, and $z$-coordinates of points in $S$.  
Hence if a  point $p'\in S'$ is assigned to the 
$i$-th layer of maxima of $S'$, then $\tau^{-1}(p')$ belongs to the $i$-th 
layer of maxima of $S$. 
We can find ranks of $x-$, $y-$, and $z-$coordinates of every point by 
sorting $S_x$, $S_y$, and $S_z$. Using the sorting algorithm of~\cite{H04}, 
$S_x$, $S_y$, and $S_z$ can be sorted in $O(n\log \log n)$ time and $O(n)$ 
space. 
Thus the layers of maxima problem can be reduced to the special case when 
all point coordinates are bounded by $O(n)$ in $O(n\log \log n)$ time.

\section{Fast Queries, Slow Updates}
\label{sec:fast}
In this section we describe a data structure that supports $\locate(q)$ in 
$O( (\log \log n)^3)$ time and update operations $\replace(q,i)$ and 
$\new(q,i)$ in $O(\log n(\log\log n)^2)$ time 
per segment.  We will store horizontal segments of all staircases 
in a data structure that supports \emph{ray shooting queries}: 
given a query point $q$ identify the first segment $s$ crossed by a vertical 
ray that is shot from $q$ in $-y$ direction; in this case 
we will say that the segment $s$ precedes $q$ (or $s$ is the predecessor
 segment of $q$). In the rest of this paper, segments will denote horizontal
 segments.
Identifying the segment that precedes $q$ 
is (almost) equivalent to answering a query $\locate(q)$.
Operation $\replace(q,i)$ corresponds to a deletion of all 
horizontal segments dominated by $q$ and an insertion of at most two 
horizontal segments, see Fig~\ref{fig:sweep}. 
Operation $\new(q,i)$ corresponds to an insertion of a new segment.  

Our data structure is a binary tree on $x$-coordinates and segments are stored 
in one-dimensional  secondary structures in  tree nodes. 
The main idea of our approach is to 
 achieve fast query time by binary search of the root-to-leaf 
path: using properties of staircases, we can determine in 
$O((\log \log n)^2)$ time whether the predecessor segment of a 
point $q$ is stored in the ancestor of a node $v$ or in the descendant 
of a node $v$ for any node $v$ on the path from the root to $q.x$. 
Our approach is similar to the data structure of~\cite{BKS95}, but we need 
additional techniques to support updates.

For a horizontal segment $s$, we denote by $\sleft(s)$ and $\sright(s)$ 
the $x$-coordinates of its left and right endpoints respectively;
we denote by $y(s)$ the $y$-coordinate of all points of $s$.
An integer $e\in S$ precedes (follows) an integer $x$ in $S$ if $e$ is 
the largest (smallest) element in $S$, such that $e\leq x$ ($e\geq x$).
Let $H$ be a set of segments and let $H_y$ be the set of $y$-coordinates 
of segments in $H$. We say that $s\in H$ precedes (follows) an integer 
$e$ if the $y$-coordinate of $s$ precedes (follows) $e$ in $H_y$.
Thus a segment that precedes a point $q$ is a segment that precedes 
$q.y$ in the set of all segments that intersect the vertical line $x=q.x$. 

We construct a balanced binary tree $\cT$ of height $\log n$ on the set of 
all possible $x$-coordinates, i.e., $n$ leaves of $\cT$ correspond to  
integers in $[1,n]$. The  range of a node $v$ is the interval 
$\rng(v)=[\leftb(v),\rightb(v)]$ where $\leftb(v)$ and $\rightb(v)$ are 
leftmost and rightmost leaf descendants of $v$.  

We say that a segment $s$ spans a node $v$ if 
$\sleft(s) < \leftb(v) < \rightb(v) < \sright(s)$; 
a segment $r$ belongs to a node $v$ if 
$\leftb(v) < \sleft(s) < \sright(s) < \rightb(v)$.
A segment $s$ $l$-cuts a node 
$v$ if $s$ intersects the  vertical line $x=\leftb(v)$, 
but $s$ does not span $v$, i.e.,
$\sleft(s)\leq \leftb(v)$ and $\sright(s) < \rightb(v)$; 
a segment $s$ $r$-cuts a node $v$ if
$s$ intersects the vertical line  $x=\rightb(v)$ but $s$ does not span $v$, 
i.e., $\sleft(s) >  \leftb(v)$ and $\sright(s) \geq \rightb(v)$. 
A segment $s$ such that $[\sleft(s),\sright(s)]\cap \rng(v)\not=\emptyset$ 
either cuts $v$, or spans $v$, or belongs to $v$. 
We store $y$-coordinates of all  segments that 
$l$-cut ($r$-cut) a node $v$  in a data structure $\cL_v$ 
($\cR_v$). Using exponential trees~\cite{AT07}, we can  implement 
$\cL_v$ and $\cR_v$ in linear space, so that one-dimensional searching 
(i.e. predecessor and successor queries) is supported in $O((\log \log n)^2)$
time. 
Since a segment cuts $O(\log n)$ nodes (at most two  nodes on each tree 
level), all $\cL_v$ and $\cR_v$ use $O(n\log n)$ space. 
We denote by $\sindex(s)$ the index of the staircase $\cM_i$ that contains 
$s$, i.e., $s\in \cM_{\sindex(s)}$.
The following simple properties are important for the search procedure:
\begin{fact}\label{fact:vert} 
Suppose that an arbitrary vertical line cuts staircases $\cM_i$ and $\cM_j$, 
$i<j$, in points $p$ and $q$ respectively. Then $p.y > q.y$ because 
staircases do not cross.  
\end{fact}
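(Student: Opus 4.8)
\noindent\emph{Proof plan.} The statement asserts only that the staircases $\cM_1,\cM_2,\dots$ are nested, $\cM_1$ being the outermost, so the plan is first to establish this nesting from the definition of the layers of maxima, then to read off $p.y\ge q.y$, and finally to make the inequality strict.

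First I would invoke the standard structural property of layers of maxima: if a point $r$ lies in layer $k\ge 2$, then $r$ is maximal in $S\setminus\bigcup_{l<k}S_l$ but not in $S\setminus\bigcup_{l<k-1}S_l$, so some point dominating $r$ must lie in $S_{k-1}$; iterating, for every $i<j$ each point of layer $j$ is dominated by some point of layer $i$. Next I would check that this survives the two projections used by the sweep of Section~\ref{sec:over}: if $r$ dominates a point of $S_j(v)$ then $r.z>v$, so $r\in S_i(v)$, and a dominating pair of points in space projects to a dominating pair in the plane. Hence every point of $P_j(v)$ is dominated by some point of $P_i(v)$, and therefore the region $D_k=\{(x,y)\mid (x,y)\ \text{is dominated by some point of}\ P_k(v)\}$, which is exactly the region lying below and to the left of the staircase $\cM_k$, satisfies $D_j\subseteq D_i$ whenever $i<j$.

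Now for the vertical line $x=c$ through $p\in\cM_i$ and $q\in\cM_j$: the $y$-coordinate of the point of $\cM_i$ on this line is the largest $y$ with $(c,y)\in D_i$ (a staircase is the upper-right boundary of its dominated region), and since $q=(c,q.y)\in\cM_j\subseteq D_j\subseteq D_i$ this already gives $q.y\le p.y$. To get $q.y<p.y$, I would argue that the staircases of two distinct layers are disjoint: after the rank-space reduction no point of the universe lies on the staircase of two different layers, since $\locate$ would then assign the corresponding input point to the smaller-indexed layer, and the update operations $\replace$ and $\new$ keep consecutive staircases strictly separated for the same reason. Pinning down this disjointness --- the ``staircases do not cross'' remark in the statement --- is the only step that needs any care; everything else is immediate.
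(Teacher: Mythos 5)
The paper itself offers no proof of Fact~\ref{fact:vert}: it is listed among the ``simple properties'' with the phrase ``because staircases do not cross'' as the entire justification. Your nested-regions argument --- every point of layer $j$ is dominated by a point of layer $i<j$, domination survives the restriction to $z>v$ and the projection $\pi$, hence the downward-closed regions satisfy $D_j\subseteq D_i$ and $q.y\le p.y$ --- is a correct and natural formalization of that one-liner, so the weak inequality is fine.

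The gap is in the strictness step, which is precisely the part of the statement that goes beyond the obvious. Your supporting claim that ``after the rank-space reduction no point of the universe lies on the staircase of two different layers'' does not hold: the map $\tau(p)=(\rank(p.x,S_x)+1,\ldots)$ preserves ties, so distinct points may still share a coordinate after the reduction, and with a tie strictness can genuinely fail at the level of the layer definition. For example, with $S=\{(10,5,10),(5,5,9)\}$ the second point lies in layer $2$ (it is dominated under the paper's ``$\geq$'' notion of dominance), and once both points are swept the staircases $\cM_1$ and $\cM_2$ share the horizontal stretch $y=5$, $0\le x\le 5$, so a vertical line at $x=3$ meets both at the same $y$. (The same example shows the tie rule ``$\pi(p)$ may also lie on $\cM_i$'' in the $\locate$-based assignment would clash with the stated definition of layers, i.e.\ the paper is implicitly assuming points in general position.) Consequently $p.y>q.y$ cannot be extracted from the nesting argument alone; you need either an explicit distinct-coordinates assumption, under which horizontal segments of different staircases lie at different heights and your $D_j\subseteq D_i$ argument immediately gives strictness, or an induction over the sweep showing that the staircases the algorithm actually maintains are strictly separated: a point assigned to layer $k+1$ lies strictly below the current $\cM_k$, the segments inserted by $\replace$ stay strictly below $\cM_k$ because $\cM_k$ is $y$-non-increasing in $x$ (Fact~\ref{fact:monot}), and $\cM_k$ only moves north-east afterwards. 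You gesture at this second route but do not carry it out, and the rank-space claim you lean on instead is false; that is the one concrete hole to fix.
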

\begin{fact}\label{fact:monot}
For any two points $p$ and $q$ on a staircase $\cM_i$,
if $p.x < q.x$, then $p.y \geq q.y$
\end{fact}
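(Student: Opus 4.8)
The plan is to prove Fact~\ref{fact:monot} by contradiction, using the defining property of $\cM_i$ as the set of maximal points of $P_i(v)$ joined into a staircase curve. First I would recall the concrete shape of $\cM_i$. Its vertices are exactly the points of $P_i(v)$ that are dominated (within $P_i(v)$) by no other point; listed from left to right as $m_1,m_2,\ldots,m_k$, no $m_a$ dominates $m_b$, so the $x$-coordinates are strictly increasing and the $y$-coordinates strictly decreasing along this list, and consecutive vertices $m_j,m_{j+1}$ are joined by the vertical segment at $x=m_j.x$ (from $y=m_j.y$ down to $y=m_{j+1}.y$) followed by the horizontal segment at $y=m_{j+1}.y$ (from $x=m_j.x$ to $x=m_{j+1}.x$), with the leftmost horizontal segment extended to $x=0$ and the rightmost vertical segment extended to $y=0$, exactly as produced by $\new$. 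Equivalently, $\cM_i$ is the upper-right boundary of the region $D_i$ of all points dominated by some point of $P_i(v)$: every point of $\cM_i$ lies in $D_i$, yet for every point $r$ on $\cM_i$ and every sufficiently small $\eps>0$ the point $(r.x+\eps,\,r.y+\eps)$ lies outside $D_i$.

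Granting this, the argument is immediate. Let $p,q\in\cM_i$ with $p.x<q.x$ and suppose, for contradiction, that $p.y<q.y$. Since $q\in D_i$, some $m\in P_i(v)$ satisfies $m.x\ge q.x$ and $m.y\ge q.y$; combined with $p.x<q.x$ and $p.y<q.y$ this yields $m.x\ge q.x>p.x$ and $m.y\ge q.y>p.y$, so $m$ dominates $p$ with slack in both coordinates. Hence for all small enough $\eps>0$ we still have $m.x\ge p.x+\eps$ and $m.y\ge p.y+\eps$, i.e.\ $(p.x+\eps,\,p.y+\eps)\in D_i$, contradicting the boundary characterization of $\cM_i$. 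Therefore $p.y\ge q.y$, as claimed.

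The one point needing care — and the step I expect to be the main, if minor, obstacle — is that ``a point on $\cM_i$'' ranges over the interiors of the horizontal and vertical segments, not only over the vertices $m_j$, so a contradiction stated purely via maximality of the $m_j$ would not cover every case. Routing the argument through $D_i$ and its boundary handles vertices and segment interiors uniformly: a point in the interior of the horizontal segment at height $m_{j+1}.y$ is dominated by $m_{j+1}$, a point in the interior of the vertical segment at $x=m_j.x$ is dominated by $m_j$, and in each case moving up and to the right escapes $D_i$ because the finitely many values $m_1.x,\ldots,m_k.x$ and $m_1.y,\ldots,m_k.y$ are the only extreme coordinates available. One could alternatively establish the monotonicity of $\cM_i$ by induction over the operations $\new$ and $\replace$, but the description above is cleaner; it also makes transparent that Fact~\ref{fact:vert}, the non-crossing of distinct staircases, follows by essentially the same reasoning applied to the nested regions $D_1\supseteq D_2\supseteq\cdots$.
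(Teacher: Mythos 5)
Your argument is correct. Note, though, that the paper does not prove Fact~\ref{fact:monot} at all: it is listed among the ``simple properties'' and is taken as immediate from the way a staircase is built (each $\cM_i$ is traced by alternating moves in the $+x$ and $-y$ directions, as in the description of $\new$ and $\replace$, so along the staircase $x$ never decreases and $y$ never increases). Your route is a genuine formalization rather than a restatement: you characterize $\cM_i$ as the upper-right boundary of the region $D_i$ dominated by $P_i(v)$ and derive monotonicity by contradiction from ``$q\in D_i$ but $(p.x+\eps,p.y+\eps)\notin D_i$.'' This is sound (your boundary characterization does hold at vertices, at interior points of horizontal and vertical segments, and on the two axis extensions, provided $\eps$ is smaller than the coordinate gaps, which the rank-space reduction guarantees), and it has the merit of handling segment interiors uniformly and of giving Fact~\ref{fact:vert} for free from the nesting $D_1\supseteq D_2\supseteq\cdots$. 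What the paper's implicit construction-based view buys instead is brevity and independence from the claim that the staircase vertices are exactly the maximal points of $P_i(v)$ (which itself would need the small observation, illustrated by point $b$ in Fig.~\ref{fig:sweep}, that $\replace$ removes dominated points); either way the fact stands.
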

\begin{fact}\label{fact:below}
Given a staircase $\cM_i$ 
and a point $p$, we can determine whether  $\cM_i$ 
is below or above $p$ and find the segment $s\in \cM_i$ 
such that $p.x\in [\sleft(s),\sright(s)]$ in $O((\log \log n)^2)$ time.
The data structure $D_i$ that supports such queries 
uses linear space and supports finger updates in $O(1)$ time.
\end{fact}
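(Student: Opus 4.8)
The plan is to keep $\cM_i$ as a doubly linked list of its horizontal segments in left‑to‑right order and to hang on it a predecessor structure over the $x$‑coordinates of the ``steps'' of the staircase; point location inside $\cM_i$ then becomes a single predecessor query, and an update --- which always touches only a contiguous run of the list plus $O(1)$ new segments --- becomes a constant amount of work per touched segment once a pointer into the run is supplied.

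First I would write down the shape of a staircase. By the definitions of $\new$ and $\replace$ together with Fact~\ref{fact:monot}, $\cM_i$ is an $x$‑monotone chain: its horizontal segments $s_1,\dots,s_k$ occupy consecutive $x$‑ranges $[0,x_1],[x_1,x_2],\dots,[x_{k-1},x_k]$ with $0<x_1<\cdots<x_k$, their heights satisfy $y(s_1)>\cdots>y(s_k)$, and the chain is closed off on the right by the vertical segment from $(x_k,y(s_k))$ down to $(x_k,0)$. Hence finding the segment $s$ with $p.x\in[\sleft(s),\sright(s)]$ is exactly a predecessor query for $p.x$ on the step set $\{0,x_1,\dots,x_k\}$, with each value pointing to the segment immediately to its right (the tie at a step $x_m$, which lies on both $s_m$ and $s_{m+1}$, resolved either way). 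Once $s_j$ with $p.x\in[\sleft(s_j),\sright(s_j)]$ is known, deciding whether $\cM_i$ lies below $p$ --- i.e.\ whether the ray from $p$ in the $-y$ direction meets $\cM_i$, the quantity the sweep of Section~\ref{sec:over} needs, consistent across staircases by Fact~\ref{fact:vert} --- is one comparison: the ray meets $\cM_i$ iff $p.x\le x_k$ and $p.y\ge y(s_j)$, with the boundary case $p.x=x_m$ using the height $y(s_{m+1})$ (or $0$ when $m=k$) and the case $p.x>x_k$ reported as ``$\cM_i$ lies entirely to the left of $p$''. All of this is $O(1)$ on top of the predecessor query, so the statement reduces to maintaining the step set $X_i\subseteq[1,n]$ (coordinates lie in $[1,n]$ after the rank‑space reduction of Section~\ref{sec:over}) under predecessor queries in $O((\log\log n)^2)$ time and under insertions and deletions at a given list position in $O(1)$ amortized time, in $O(n)$ total space.

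I would realize this with the linked list above plus an exponential tree~\cite{AT07} built on a \emph{sample} consisting of every $\Theta((\log\log n)^2)$‑th list element. A predecessor query first locates $p.x$ among the samples (the exponential tree does this in $O((\log\log n)^2)$ time, as for $\cL_v,\cR_v$ in Section~\ref{sec:fast}) and then scans the $O((\log\log n)^2)$ list elements between the two bracketing samples. An update splices its $O(1)$ new nodes in at $O(1)$ each and deletes each old node in $O(1)$; a deleted node that happened to be a sample is removed from the exponential tree at cost $O((\log\log n)^2)$, which is $O(1)$ amortized because a sample witnesses $\Theta((\log\log n)^2)$ deletions before being dropped; whenever a gap between consecutive samples grows to twice its target it is split by promoting a middle element to a new sample, an $O((\log\log n)^2)$ operation charged against the $\Theta((\log\log n)^2)$ insertions that caused the overflow, and over‑small gaps left by deletions are merged symmetrically; and a full rebuild of the sample set and its exponential tree from the current list, performed after $\Theta(|X_i|)$ elementary operations, resets everything in time $O(|X_i|)$, i.e.\ $O(1)$ amortized per operation. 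The linked list and the sample tree use $O(|X_i|)$ space each, and summing over all staircases (whose segments total $O(n)$ at any time) gives $O(n)$.

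The query and the below/above test are routine once the monotone structure of a staircase is made explicit; the part that needs care is the update accounting --- verifying that every promotion or demotion of a sample, and every full rebuild, is charged to $\Theta((\log\log n)^2)$, respectively $\Theta(|X_i|)$, elementary list operations so that the amortized cost is $O(1)$ rather than $O((\log\log n)^2)$, and that the gap rebalancing keeps every sample gap within a constant factor of $\Theta((\log\log n)^2)$ so that queries remain $O((\log\log n)^2)$ in the worst case. (If truly worst‑case finger updates were wanted, the gap splitting can be de‑amortized in the standard way, but amortized $O(1)$ is all that Sections~\ref{sec:fast}--\ref{sec:fin} use.)
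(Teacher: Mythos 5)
Your reduction is the same one the paper uses: store the $x$-coordinates of the staircase's segment endpoints in a predecessor structure, locate $p.x$ in $O((\log\log n)^2)$ time, and decide above/below with a single $y$-comparison against the segment found (the paper's proof is exactly this, phrased as ``$\cM_i$ is below $p$ iff $s$ is below $p$''), so the query half of your argument matches the paper. Where you genuinely diverge is the update bound: the paper simply declares that $D_i$ is an exponential tree and asserts the $O(1)$ finger-update claim, even though the native update time of an exponential search tree is of the same order as its query time, not constant; you instead keep the staircase as a doubly linked list and build the exponential tree only on a sample of every $\Theta((\log\log n)^2)$-th step, scanning within a bucket at query time and paying for sample promotions/demotions and periodic rebuilds by an amortization charged to $\Theta((\log\log n)^2)$, respectively $\Theta(|X_i|)$, elementary list operations. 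This bucketing actually substantiates the $O(1)$ finger-update claim (in the amortized sense, which, as you note, is all that the algorithm of Sections~\ref{sec:fast}--\ref{sec:fin} needs, since update costs there are only ever summed), at the price of a slightly more involved structure and the worst-case/amortized caveat; the paper's version is shorter but leaves that claim unsupported. Your handling of the boundary cases (ties at a step, $p.x$ beyond the rightmost step) is consistent with how the main search procedure uses the fact, so no gap there.
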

\begin{proof}
The data structure $D_i$ contains $x$-coordinates 
of all segment endpoints of $\cM_i$. 
$D_i$ is implemented as an exponential tree so that 
it uses $O(n)$ space. 
Using $D_i$ we can identify $s\in \cM_i$ such that 
$p.x \in [\sleft(s),\sright(s)]$ in $O((\log \log n)^2)$ time;
$\cM_i$ is below $p$ if and only if $s$ is below $p$. 
\end{proof}
Using Fact~\ref{fact:below}
 we can determine whether a segment $s$ precedes a point $q$ 
in $O((\log\log n)^2)$ time: Suppose that $s$ belongs to a staircase 
$\cM_i$. Then $s$ is the predecessor segment of $q$ iff $q.x\in [\sleft(s),\sright(s)]$, $q.y\geq y(s)$ and the 
staircase $\cM_{i-1}$ is above $q$. \\
We can use these properties and data structures
$\cL_v$ and $\cR_v$ to determine whether a segment $b$ that precedes
 a point $q$ spans 
a node $v$, belongs to a node $v$, or cuts a node $v$. If the segment 
$b$ we are looking for spans $v$, then it cuts an ancestor of $v$; if that 
segment belongs to $v$, then it cuts a descendant of $v$. Hence, we can 
apply binary search and find in $O(\log \log n)$ iterations the node 
$f$ such that the predecessor segment of $q$ cuts $f$. 
Observe that in some situations there may be no staircase $\cM_i$ below 
$q$, see  Fig~\ref{fig:nobelow} for an example. 
To deal with such situations, we insert 
a dummy segment $s_d$ with left endpoint $(1,0)$ and right endpoint
$(n,0)$; we set $\sindex(s_d)=+\infty$ and store $s_d$ in the data structure 
$\cL_{v_0}$ where $v_0$ is the root of $\cT$.

Let $l_x$ be the leaf in which the predecessor of $q.x$ is stored. 
We will use variables $l$, $u$ and $v$ to guide the search for the node $f$. 
Initially we set $l=l_x$ and  $u$ is  the root of $\cT$. 
We set $v$ to be  the 
middle node between $u$ and $l$: if the path between $u$ and $l$ consists of 
$h$ edges, then the path from $u$ to $v$ consists of $\lfloor h/2\rfloor$ 
edges and $v$ is an ancestor of $l$.

Let $r$ and $s$ denote the segments in $\cL_v$ that precede and 
follow $q.y$.
If there is no segment $s$ in $\cL_v$ with $y(s)>q.y$, then 
we set $s=NULL$. 
If there is no segment $r$ in $\cL_v$ with $y(r)\leq q.y$, then 
we set $r=NULL$. 
We can find both $r$ and $s$ in $O((\log \log n)^2)$ time. 
If the segment $r\not= NULL$, we check whether the staircase  
$\cM_{\sindex(r)}$ contains the predecessor segment of $q$; by Fact~\ref{fact:below}, 
this can be done in $O((\log\log n)^2)$ time. 
If $\cM_{\sindex(r)}$ contains the predecessor segment of $q$, 
the search is completed. 
Otherwise, the staircase $\cM_{\sindex(r)-1}$ is below $q$ or 
$r=NULL$. 
In this case we find the segment $r'$ that precedes $q.y$ in $\cR_v$. 
If $r'$ is not the predecessor segment of $q$ or $r'=NULL$, 
then the predecessor segment of $q$ either spans $v$ or belongs to $v$. 
We distinguish between the following two cases:\\ 
{\bf 1.} The segment $s\not=NULL$  and the staircase that contains $s$ is
 below $q$. By Fact~\ref{fact:vert}, a vertical line 
$x=q.x$ will cross the staircase of $s$ before it will cross 
a staircase $\cM_i$, $i> \sindex(s)$. Hence, a segment that spans $v$ and 
belongs to the staircase $\cM_i$, $i> \sindex(s)$, cannot 
be the predecessor segment of $q$. If a segment $t$ spans $v$ and 
$\sindex(t) < \sindex(s)$, then the $y$-coordinate of $t$ is larger than the 
$y$-coordinate of $s$ by Fact~\ref{fact:vert}. Since $y(t)> y(s)$ and 
$y(s)> q.y$,
the segment $t$ is above $q$. 
Thus no segment that spans $v$ can be the predecessor 
of $q$. \\
{\bf 2.} The staircase that contains $s$ is above $q$ or $s=NULL$. 
If $r$ exists, the staircase 
 $\cM_{\sindex(r)-1}$ is below $q$. Hence, the predecessor segment of $q$ 
belongs  to a staircase\footnote{To simplify the description,
 we assume that $\sindex(s)=0$ if $s=NULL$.} 
$\cM_i$, $\sindex(s) < i \leq \sindex(r)-1$.
Since each staircase $\cM_i$, $\sindex(s) < i \leq \sindex(r)-1$, 
 contains a segment that spans $v$, the predecessor segment of $s$ is 
a segment that spans $v$. 
If $r$ does not exist, then every segment below the point $q$ spans 
the node $v$. Hence, the predecessor segment of $s$ spans $v$.
See Fig.~\ref{fig:pl} for an example.

\begin{figure}[tb]
  \centering
  \begin{tabular}{ccccc}
  \includegraphics[width=.25\textwidth]{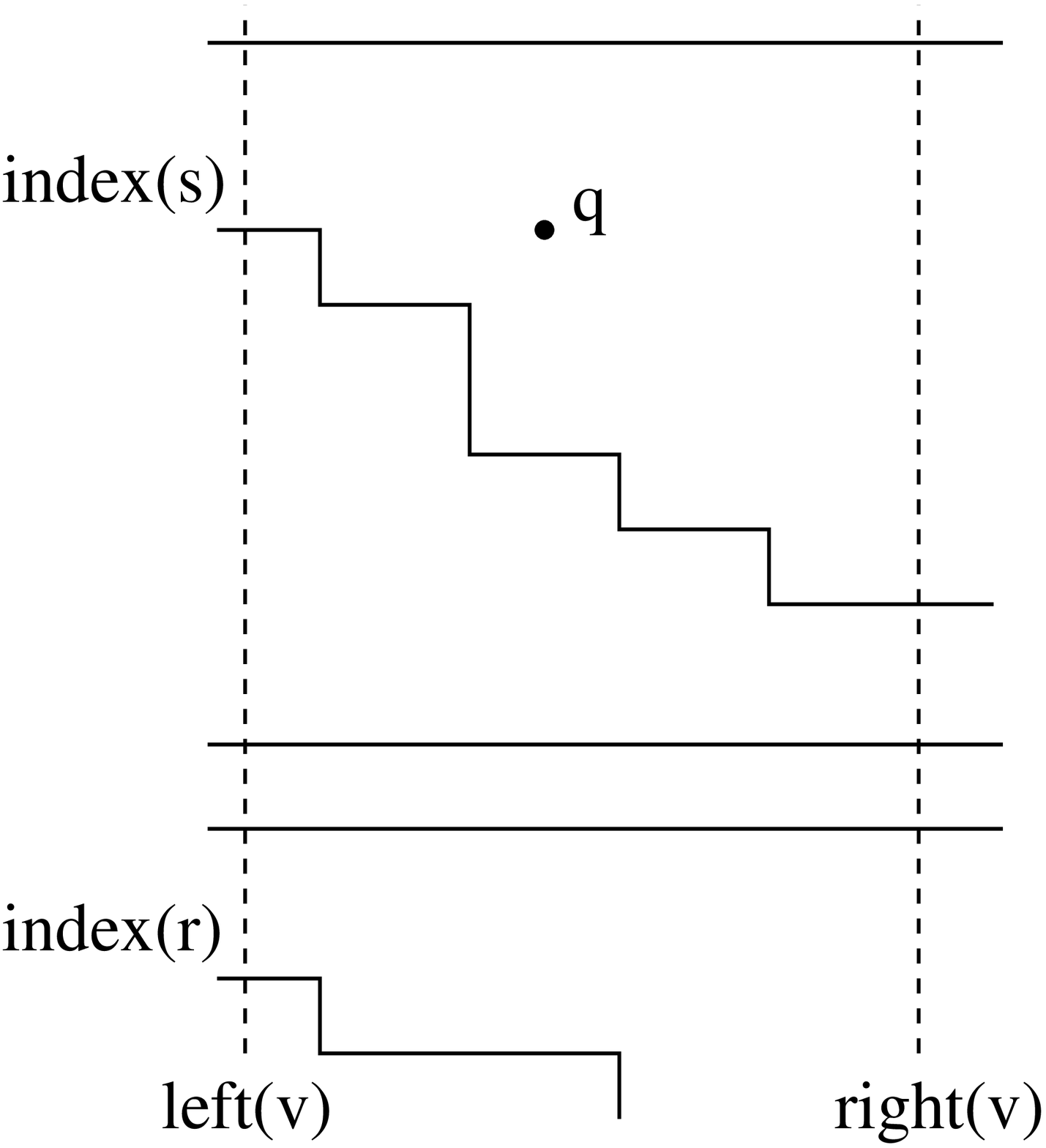} & \hspace*{.2cm} &
  \includegraphics[width=.25\textwidth]{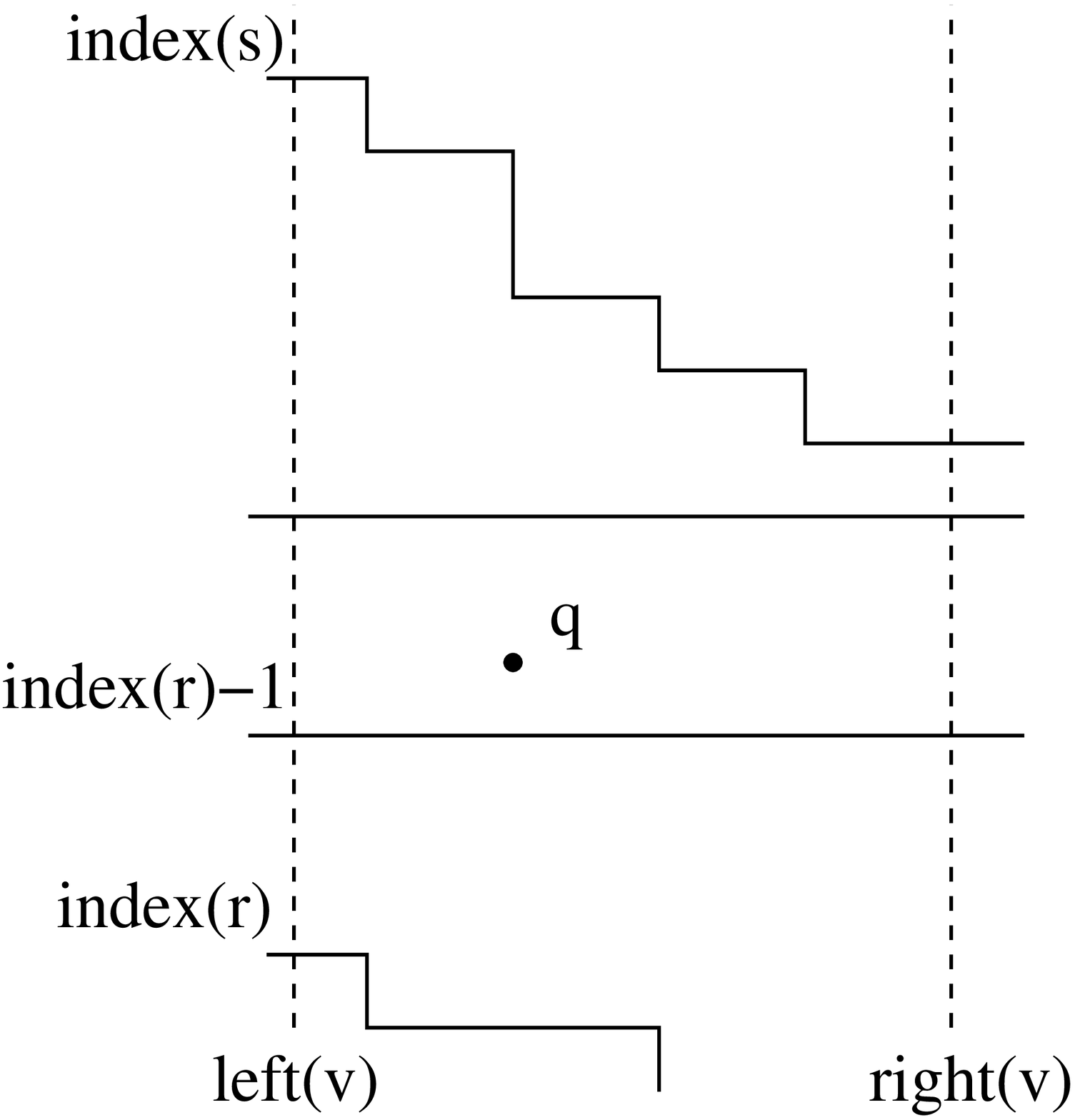} & \hspace*{.2cm} 
  \includegraphics[width=.25\textwidth]{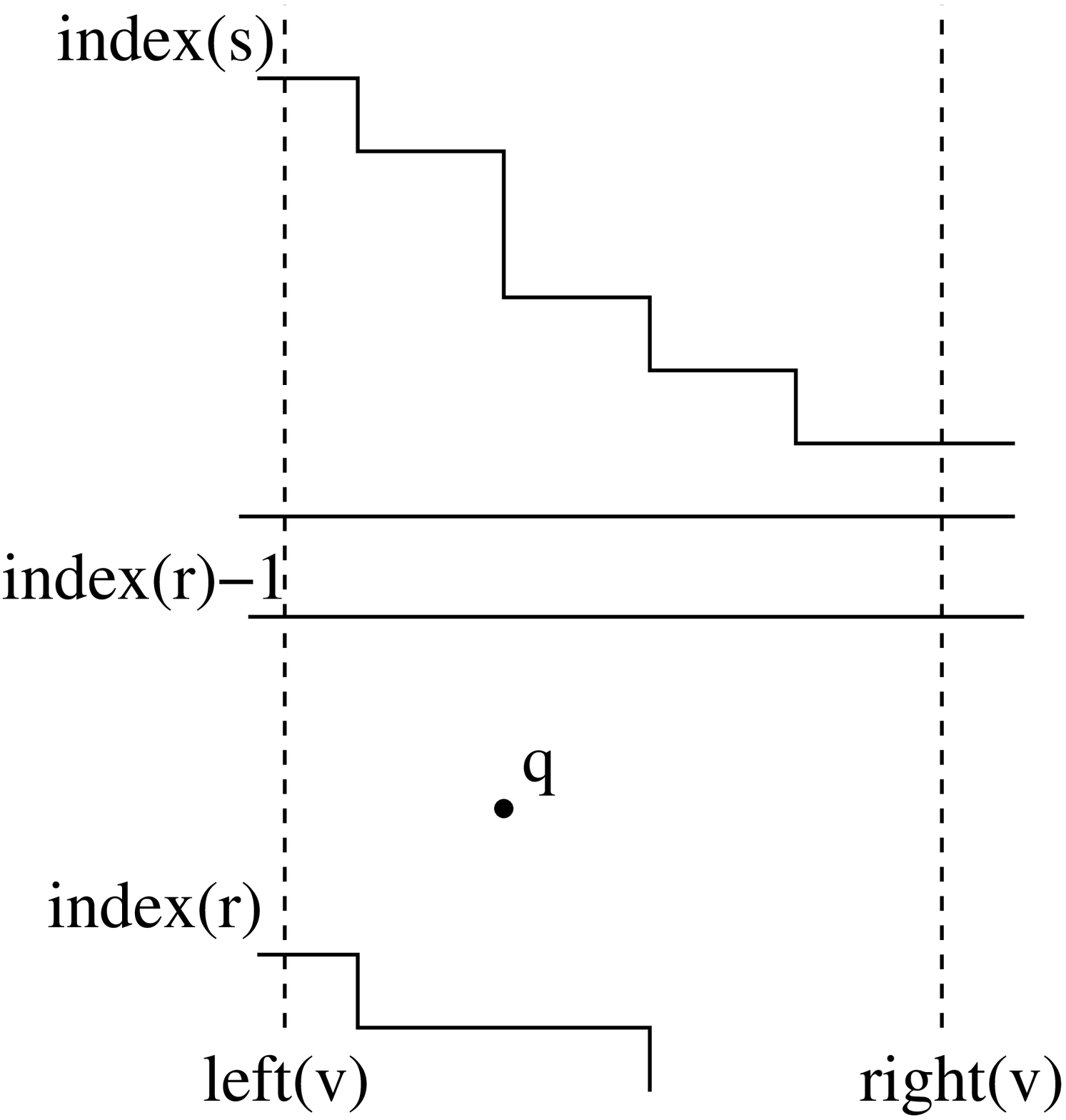}  \\
  {\bf (a)}   &  & {\bf(b)} &  {\bf (c)} \\
  \end{tabular}
  \caption{Search procedure in a node $v$. 
    Staircases are denoted by their indexes. 
    Figures {\bf (a)} and  {\bf (b)} correspond to cases {\bf 1} and {\bf 2} 
    respectively. The case when the predecessor segment belongs to 
    $\cM_{\sindex(r)}$ is shown on Fig. {\bf (c)}.
     }
  \label{fig:pl}
\end{figure}

If the  predecessor segment spans $v$, we search for $f$ among  ancestors 
of $v$;
if the  predecessor segment belongs to $v$, we search for $f$ among 
descendants 
of $v$.  Hence, we set $l=v$ in case 2, and we set $u=v$ in 
case 1. 
Then, we set $v$ to be the middle node between $u$ and $l$ and examine 
the new node $v$. Since we examine $O(\log \log n)$ nodes and spend 
$O((\log \log n)^2)$ time in each node, the total query time is 
$O((\log \log n)^3)$.

If the predecessor segment is the dummy segment $s_d$, then 
there is no horizontal segment of any $\cM_i$ below $q$. 
In this case we must identify the staircase to the left of $q.x$.
Let $m_i$ denote the rightmost point on the staircase $\cM_i$, 
i.e., $m_i$ is a point on $\cM_i$ such that  $m_i.y=0$. 
Then $q$ is between staircases $\cM_{i-1}$ and $\cM_i$, such that 
$m_i.x < q.x < m_{i-1}.x$. We can find $m_i$ in $O((\log\log n)^2)$ 
time. 

When a segment $s$ is deleted, we delete it from the corresponding data
 structure $D_i$. We also delete $s$ from all data structures $\cL_v$ 
and $\cR_w$ for all nodes $v$ and $w$, such that $s$ $l$-cuts $v$  
(respectively $r$-cuts $w$). Since a segment cuts $O(\log n)$ nodes 
and exponential trees support updates in $O((\log\log n)^2)$ time, a 
deletion takes $O(\log n(\log\log n)^2)$ time. 
Insertions are supported in the same way\footnote{The update time can be 
slightly improved using fractional cascading and similar techniques, but 
this is not necessary for our presentation.}. 
Operation $\new(q,l)$ is implemented by inserting a segment with 
endpoints $(0,q.y)$ and $(q.x,q.y)$ into $\cT$, incrementing by one the 
number of staircases $l$, and creating the new data structure $D_l$. 
To implement $\replace(q,i)$ we delete the segments ``covered'' 
by $q$ from $\cT$ and $D_i$ and insert the new segment (or two new segments)
 into $\cT$ and $D_i$.
\begin{lemma}\label{lemma:slow}
We can store $n$ horizontal staircase segments with endpoints 
on $n\times n$ grid in a $O(n\log n)$ space data
 structure that answers ray shooting queries in $O((\log \log n)^3)$ time 
and supports operation $\replace(q,i)$  in $O(m\log n(\log\log n)^2)$ time 
where $m$ is the number of segments inserted into and deleted from the 
staircase $\cM_i$, and operation $\new(q)$ in $O(\log n(\log\log n)^2)$ time.  
\end{lemma}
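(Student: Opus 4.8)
The plan is to assemble the data structure and the analysis that were sketched above into a single argument. First I would fix the underlying search structure: build the balanced binary tree $\cT$ of height $\log n$ over $[1,n]$, attach to each node $v$ the exponential trees $\cL_v$ and $\cR_v$ storing the $y$-coordinates of the segments that $l$-cut, respectively $r$-cut, $v$, attach to each staircase $\cM_i$ the finger-search structure $D_i$ of Fact~\ref{fact:below}, and insert the dummy segment $s_d$ into $\cL_{v_0}$. For the space bound I would argue that $\cT$ has $O(n)$ nodes, that each of the $n$ segments cuts at most two nodes per level and hence $O(\log n)$ nodes in total, so the $\cL_v$ and $\cR_v$ together occupy $O(n\log n)$ space, while all $D_i$ and the exponential trees themselves are linear; the single dummy segment contributes nothing.

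Next, for $\locate$ I would spell out the binary-search-on-the-path argument. Maintain $l=l_x$ (the leaf holding the predecessor of $q.x$) and $u$ the root, and let $v$ be the midpoint of the $u$--$l$ path. The key claim to establish is the loop invariant: the predecessor segment $b$ of $q$ either cuts some node on the current $u$--$l$ path, cuts a proper ancestor of $u$, or belongs to a proper descendant of $l$. To carry out one step I would show, using $\cL_v$, $\cR_v$, Fact~\ref{fact:below}, and the case analysis of cases~\textbf{1} and~\textbf{2}, that in $O((\log\log n)^2)$ time we can decide which of ``$b$ cuts $v$'', ``$b$ spans $v$'', ``$b$ belongs to $v$'' holds: if $\cM_{\sindex(r)}$, or the staircase of the segment $r'$ found in $\cR_v$, already yields the predecessor, we stop; otherwise Facts~\ref{fact:vert} and~\ref{fact:monot} force $b$ to span $v$ in case~\textbf{1} and to belong to $v$ in case~\textbf{2}, so we reset $u\leftarrow v$ or $l\leftarrow v$ and recurse. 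After $O(\log\log n)$ steps $l$ and $u$ become adjacent and $b$ is found, for a total of $O((\log\log n)^3)$. Finally I would dispatch the degenerate case in which the predecessor segment is $s_d$: then no horizontal segment lies below $q$, and I locate $q$ between $\cM_{i-1}$ and $\cM_i$ by finding, in $O((\log\log n)^2)$ time, the rightmost staircase points $m_i$ with $m_i.x<q.x<m_{i-1}.x$.

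For the updates I would observe that deleting or inserting one segment $s$ amounts to updating $D_{\sindex(s)}$ in $O(1)$ finger time and updating $\cL_v$ or $\cR_v$ at each of the $O(\log n)$ nodes that $s$ cuts, each exponential-tree update costing $O((\log\log n)^2)$; hence $O(\log n(\log\log n)^2)$ per segment. Since $\replace(q,i)$ deletes the $m$ segments of $\cM_i$ dominated by $q$ and inserts at most two new ones, it costs $O(m\log n(\log\log n)^2)$, and $\new(q)$ inserts one segment, increments the staircase counter, and creates an empty $D_l$, hence $O(\log n(\log\log n)^2)$. The hard part is not this bookkeeping but verifying the loop invariant above --- that the purely local test at $v$ genuinely certifies whether the \emph{global} predecessor segment spans $v$ or is confined to the subtree rooted at $v$ --- which is exactly what the case analysis of cases~\textbf{1} and~\textbf{2} accomplishes, relying critically on the non-crossing property (Fact~\ref{fact:vert}) and the monotonicity of a single staircase (Fact~\ref{fact:monot}).
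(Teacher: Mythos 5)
Your proposal matches the paper's own argument essentially verbatim: the same tree $\cT$ with the exponential trees $\cL_v$, $\cR_v$, the per-staircase structures $D_i$ and the dummy segment $s_d$, the same binary search over the root-to-leaf path spending $O((\log\log n)^2)$ per node for $O(\log\log n)$ nodes, the same handling of the no-staircase-below case via $m_i$, and the same $O(\log n(\log\log n)^2)$-per-segment update and $O(n\log n)$ space accounting. The only slip is in the case labels: with the paper's numbering it is case \textbf{1} (the successor staircase in $\cL_v$ lies below $q$) that forces the predecessor segment to \emph{belong} to $v$ (so one sets $u\leftarrow v$ and descends), and case \textbf{2} that forces it to \emph{span} $v$ (so $l\leftarrow v$); you state these conclusions swapped, though the underlying reasoning via Facts~\ref{fact:vert} and~\ref{fact:monot} that you describe is the correct one.
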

The data structure of Lemma~\ref{lemma:slow} is deterministic. We can further
improve the query time if randomization is allowed. 
\begin{fact}\label{fact:randbelow}
Given a staircase $\cM_i$ 
and a point $p$, we can determine whether  $\cM_i$ 
is below or above $p$ and find the segment $s\in \cM_i$ 
such that $p.x\in [\sleft(s),\sright(s)]$ in $O((\log \log n))$ time.
The data structure $D_i$ that supports such queries 
uses linear space and supports finger updates in $O(1)$ expected time.
\end{fact}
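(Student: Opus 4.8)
The plan is to rerun the proof of Fact~\ref{fact:below} almost word for word, replacing the exponential tree by a randomized dynamic predecessor structure. Concretely, I would keep in $D_i$ the $x$-coordinates of the segment endpoints of $\cM_i$ in a structure that answers predecessor and successor queries in $O(\log\log n)$ time, supports an insertion or a deletion in $O(1)$ expected time once a pointer to a neighbouring stored coordinate is supplied, and uses space linear in the number of stored coordinates. Granting such a structure, the rest is unchanged: to decide whether $\cM_i$ is below or above $p$ we locate in $O(\log\log n)$ time the segment $s\in\cM_i$ with $p.x\in[\sleft(s),\sright(s)]$, after which $\cM_i$ is below $p$ iff $s$ is below $p$, an $O(1)$ comparison of $y(s)$ with $p.y$; and since the staircases are pairwise disjoint, $\sum_i|\cM_i|=O(n)$, so all the $D_i$ together still occupy $O(n)$ space.

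For the predecessor structure I would use a Willard-style two-level decomposition tuned so that an update is cheap when a finger is known. Partition the stored coordinates into consecutive buckets of $\Theta(\log\log n)$ coordinates, each kept as a sorted doubly linked list, and maintain over the buckets a van Emde Boas tree, with cuckoo-hashed dictionaries in place of its cluster arrays, on one representative per bucket. A query first does an $O(\log\log n)$-time predecessor search among the $O(|\cM_i|/\log\log n)$ representatives and then scans the relevant bucket in $O(\log\log n)$ time, for $O(\log\log n)$ total (worst case, since cuckoo hashing gives constant-time lookups). A finger update splices one coordinate into or out of a bucket's list in $O(1)$ time; when a bucket violates its size bounds it is split or merged and its representative in the van Emde Boas tree changes, which costs $O(\log\log n)$ expected time, but this happens only once per $\Omega(\log\log n)$ updates to that bucket, so it contributes $O(1)$ per update. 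Thus finger updates run in $O(1)$ expected time, and the space is $O(|\cM_i|)$ because the hashed van Emde Boas tree uses space proportional to (number of representatives) times $O(\log\log n)$, i.e. $O(|\cM_i|)$.

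The routine-but-genuinely-delicate point — and the main obstacle — is exactly this reconciliation of $O(1)$ finger updates with $O(\log\log n)$ queries in linear space: an exponential tree (as used in Fact~\ref{fact:below}) or any balanced search tree placed inside a bucket would already make a finger update cost $\Theta(\log\log n)$, so one really needs the linked-list-plus-hashed-top-structure layout together with the standard global-rebuilding accounting for bucket splits, merges, and the resulting representative changes; randomization enters only through the hash tables, which is why the bound on finger updates is $O(1)$ only in expectation.
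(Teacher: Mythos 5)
Your proposal is correct, and at the top level it is the same reduction as the paper's: keep the $D_i$ of Fact~\ref{fact:below} (the $x$-coordinates of the endpoints of $\cM_i$, with the below/above test done by one predecessor search plus a $y$-comparison) and swap the exponential tree for a randomized predecessor structure with $O(\log\log n)$ queries, $O(1)$ expected finger updates, and linear space. The only difference is that the paper obtains this structure by a one-line citation of Willard's y-fast tree~\cite{W83}, whereas you build an equivalent structure explicitly: buckets of $\Theta(\log\log n)$ coordinates kept as sorted linked lists, with a hashed van Emde Boas / x-fast-style top structure on one representative per bucket, and the usual split/merge amortization. Your explicit design actually substantiates the finger-update claim more carefully than the bare citation does, since a textbook y-fast trie keeps $\Theta(\log n)$-size balanced trees in its buckets and would charge $\Theta(\log\log n)$ even for a finger update, so some such modification (linked-list buckets, amortized representative maintenance) is needed anyway; the price, as you note, is that the $O(1)$ bound is amortized expected, which is exactly what the paper's use of the structure (total work over all updates) requires.
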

\begin{proof}
The data structure is the same  as in the proof of 
Fact~\ref{fact:below}, but we use the y-fast tree data structure~\cite{W83} 
instead of the exponential tree.
\end{proof}
\begin{lemma}\label{lemma:slowrand}
We can store $n$ horizontal staircase segments with endpoints 
on $n\times n$ grid in a $O(n\log n)$ space data
 structure that answers ray shooting queries in $O((\log \log n)^2)$ time 
and supports operation $\replace(q,i)$  in $O(m\log n \log\log n)$ expected 
time 
where $m$ is the number of segments inserted into and deleted from the 
staircase $\cM_i$, and operation $\new(q)$ in $O(\log n \log\log n)$ expected 
time.  
\end{lemma}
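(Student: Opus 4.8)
The plan is to reuse, essentially verbatim, the data structure, the search procedure, and the update procedure developed for Lemma~\ref{lemma:slow}, replacing every exponential tree by a y-fast tree~\cite{W83}; recall that a y-fast tree over the universe $[1,n]$ supports predecessor and successor queries in $O(\log\log n)$ worst-case time and insertions and deletions in $O(\log\log n)$ expected time. Accordingly, each secondary structure $\cL_v$ (resp.\ $\cR_v$) holding the $y$-coordinates of the segments that $l$-cut (resp.\ $r$-cut) a node $v$ of $\cT$ becomes a y-fast tree, and each per-staircase structure $D_i$ is taken to be the randomized structure of Fact~\ref{fact:randbelow} rather than that of Fact~\ref{fact:below}. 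Since a segment still cuts at most two nodes on each of the $\log n$ levels of $\cT$, the structures $\cL_v$ and $\cR_v$ together occupy $O(n\log n)$ space and all the $D_i$ occupy $O(n)$ space, so the total space remains $O(n\log n)$.

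For the query bound, the first thing I would verify is that the primitive ``does segment $s$ precede the query point $q$?'' now costs $O(\log\log n)$ instead of $O((\log\log n)^2)$: by Fact~\ref{fact:randbelow} we locate $q.x$ among the endpoints of $\cM_{\sindex(s)}$ and decide whether $\cM_{\sindex(s)-1}$ lies above $q$ in $O(\log\log n)$ time, and the remaining tests $q.x\in[\sleft(s),\sright(s)]$ and $q.y\geq y(s)$ take constant time. It then follows that the work performed at a single node $v$ on the root-to-leaf search path --- finding the predecessor $r$ and the successor $s$ of $q.y$ in $\cL_v$, the predecessor $r'$ of $q.y$ in $\cR_v$, and running the constant number of ``precedes'' tests of the two node cases --- is $O(\log\log n)$. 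The binary search over the path still examines $O(\log\log n)$ nodes, so a ray shooting query takes $O((\log\log n)^2)$ time; the handling of the dummy segment $s_d$, which locates $q$ between two staircases by searching for their rightmost points $m_i$, is likewise $O(\log\log n)$ and does not affect the bound.

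For the update bounds, inserting or deleting a single segment $s$ touches the $O(\log n)$ structures $\cL_v$ and $\cR_w$ of the nodes that $s$ $l$-cuts or $r$-cuts, together with the one structure $D_{\sindex(s)}$; each y-fast-tree update costs $O(\log\log n)$ expected time, and each $D_i$ finger update costs $O(1)$ expected time by Fact~\ref{fact:randbelow}, for $O(\log n\log\log n)$ expected time per segment. Operation $\replace(q,i)$ consists of $m$ such single-segment updates --- deleting the segments covered by $q$ and inserting the one or two replacement segments --- and therefore runs in $O(m\log n\log\log n)$ expected time, while $\new(q)$ performs one insertion plus the $O(1)$-time creation of an empty $D_l$ and runs in $O(\log n\log\log n)$ expected time.

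The only point that needs care is that correctness of the search procedure is unaffected by the substitution, and I expect this to be the easy part rather than a genuine obstacle: Facts~\ref{fact:vert} and~\ref{fact:monot} are purely geometric statements about the staircases $\cM_i$, and the correctness (as opposed to the running time) of the locate-within-a-staircase primitive --- Fact~\ref{fact:below}, now Fact~\ref{fact:randbelow} --- does not depend on which one-dimensional structure implements it. Hence every invariant maintained by the original binary search over the path still holds, and randomization enters only the running times, and only in expectation, through the hashing internal to the y-fast trees. In short, Lemma~\ref{lemma:slowrand} is the routine ``plug in a faster randomized predecessor structure'' variant of Lemma~\ref{lemma:slow}.
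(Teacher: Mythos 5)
Your proposal is correct and matches the paper's own proof: Lemma~\ref{lemma:slowrand} is obtained exactly by taking the structure of Lemma~\ref{lemma:slow}, implementing $\cL_v$ and $\cR_v$ as y-fast trees~\cite{W83} and $D_i$ via Fact~\ref{fact:randbelow}, so that each node on the search path costs $O(\log\log n)$ and a query costs $O((\log\log n)^2)$. Your additional spelling-out of the space and expected update bounds is consistent with (and slightly more detailed than) the paper's argument.
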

\begin{proof}
Our data structure is the same as in the proof of Lemma~\ref{lemma:slow}. 
But we implement $D_i$ using Fact~\ref{fact:randbelow}. Data structures 
$\cL_v$ and $\cR_v$ are implemented using  the y-fast tree~\cite{W83}. 
Hence, the search procedure spends $O(\log \log n)$ time in each node 
of $\cT$ and a query is answered in $O((\log\log n)^2)$ time. 
\end{proof}

Although this is not necessary for further presentation, 
we can prove a similar result for the case when all segment endpoints are 
on a $U\times U$ grid; the query time is $O(\log\log U + (\log \log n)^3)$ 
and the update time is $O(\log^3 n(\log\log n)^2)$ per segment. 
See Appendix D for a proof of this result.

\section{Additional Staircases}
\label{sec:stair}
\tolerance=1500
The algorithm in the previous section needs $O(n\log n(\log\log n)^2)$ time to construct
 the layers of maxima: $n$ ray shooting queries can be performed in 
$O(n(\log \log n)^3)$ time, but $O(n)$ update operations take 
$O(n \log n(\log\log n)^2)$ time. To speed-up the algorithm and improve the 
space usage, we  reduce the number 
of updates and the number of segments in the data structure of Lemma~\ref{lemma:slow} to $O(n/\log^2 n)$. 

Let $\cD$ denote the data structure of Lemma~\ref{lemma:slow}. We 
construct and maintain a new sequence of staircases
 $\cB_1,\cB_2,\ldots, \cB_m$, where  $m\leq n/d$ and the parameter $d$ will be
 specified later. All horizontal segments of $\cB_1,\ldots,\cB_m$ are stored 
in $\cD$. The new staircases satisfy the following conditions: \\
1. There are $O(\frac{n}{d})$ horizontal segments in all staircases $\cB_i$\\
2. $\cD$ is updated $O(\frac{n}{d})$ times during the execution of the 
sweep plane algorithm.\\
3. For any point $q$ and for any $i$, if $q$ is between $\cB_{i-1}$ and 
$\cB_{i}$, then $q$ is situated between $\cM_k$ and $\cM_{k+1}$ 
for $(i-3/2)d \leq k \leq (i+1/2)d$. \\
Conditions 1 and 2 imply that the data structure $\cD$ uses $O(n)$ space 
and all updates of $\cD$ take $O(n)$ time if $d\geq \log n(\log\log n)^2$. 
Condition 3 means that we can 
use staircases $\cB_i$ to guide the search among $\cM_k$: we first identify 
the index $i$, such that the query point $q$ is between $\cB_{i+1}$ and 
$\cB_i$, and then locate $q$ in  $\cM_{(i-3/2)d},\ldots, \cM_{(i+1/2)d}$. 
It is not difficult to construct $\cB_i$ that satisfy conditions 1 and 3. 
The challenging part is maintaining the staircases $\cB_i$ with a small number 
of updates.  
\begin{lemma}\label{lemma:stair}
The total number of inserted and deleted segments in all $\cB_i$ is 
$O(\frac{n}{d})$. The number of segments stored in $\cB_i$ is 
$O(\frac{n}{d})$.
\end{lemma}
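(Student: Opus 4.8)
The plan is to maintain each $\cB_i$ as a deliberately stale, coarsened copy of an original staircase lying between layers $(i-1/2)d$ and $(i+1/2)d$, and to exploit the fact that every $\cM_k$ evolves \emph{monotonically}: a $\replace(p,k)$ or $\new(p,k)$ operation never moves a staircase down or to the left (the inserted point $p$ lies weakly above $\cM_k$), layers are never deleted, and their indices never change, so once $\cM_k$ exists its $y$-profile is nondecreasing over the whole sweep. The first step is to fix the invariant
\[
\cM_{(i+1/2)d}\ \preceq\ \cB_i\ \preceq\ \cM_{(i-1/2)d}
\]
(all half-integer indices tacitly rounded in one fixed way), where $s\preceq s'$ abbreviates ``$s$ lies weakly below $s'$'' and a not-yet-created lower layer reads as $-\infty$. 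Because the rounding is consistent, the invariant forces the $\cB_i$ to be linearly ordered, $\cB_i\preceq\cM_{(i-1/2)d}\preceq\cB_{i-1}$, and then Condition~3 is immediate: a point $q$ strictly between $\cB_{i-1}$ and $\cB_i$ satisfies $\cM_{(i+1/2)d}\preceq q\preceq\cM_{(i-3/2)d}$, so the first staircase weakly below $q$ has index $k+1$ with $(i-3/2)d\le k\le(i+1/2)d$. At any \emph{fixed} moment one can build $\cB_i$ as a coarsening of $\cM_{(i-1/2)d}$ that obeys the sandwich and retains only about one in every $d$ of its breakpoints: the $\Theta(d)$ non-crossing layers between the two guards leave enough vertical room for this. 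Summed over $i$ this yields $O(n/d)$ segments, so Conditions~1 and~3 are the easy part and the whole difficulty lies in Condition~2.

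The second step is the dynamic maintenance. We create $\cB_i$, as a coarsening of the then-tiny $\cM_{(i-1/2)d}$, at the instant that layer first appears, and afterwards $\cB_i$ changes only through calls to $\Rectify(\cB_i)$. Whenever a $\replace$ or $\new$ raises (or creates) the lower guard $\cM_{(i+1/2)d}$ so that it reaches or passes the current $\cB_i$, we call $\Rectify(\cB_i)$: we take the portion $I$ of the $x$-axis on which this happens and overwrite $\cB_i$ over $I$ by a fresh coarsening of the \emph{current} $\cM_{(i-1/2)d}$ restricted to $I$. Since $\cM_{(i-1/2)d}$ never moves down afterwards, the upper half of the invariant is preserved until the next rectification, while the lower half is restored exactly at the threshold (there is $\Theta(d)$ vertical slack, so no immediate re-trigger); hence the invariant holds at all times and Condition~3 is never violated. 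The second statement of the lemma then follows from the first, since the number of segments ever stored in a $\cB_i$ cannot exceed the number ever inserted into it.

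The third and essential step is an amortized charging argument. The total number of horizontal-segment insertions into all the $\cM_k$ over the whole sweep is $O(n)$ — each of the $n$ points triggers $O(1)$ such insertions through its $\replace$ or $\new$ — and the total number of deletions is likewise $O(n)$. I would show that every segment written into some $\cB_i$ by a rectification can be charged to $\Omega(d)$ distinct $\cM$-update events, whence the total number of segments inserted into the $\cB_i$ is $O(n/d)$ and, since each such segment is deleted at most once, so is the total number of deletions; this is exactly the lemma. The reason the charge exists is that a call $\Rectify(\cB_i)$ over $I$ is provoked only when, over $I$, the lower guard $\cM_{(i+1/2)d}$ has risen past the level $\cB_i$ occupied at its previous rectification — a level pinned to $\cM_{(i-1/2)d}$ at that earlier time. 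Because staircases never cross, that rise drags each of the $\Theta(d)$ layers squeezed between the two guards up past the same level on $I$, and the insertions and deletions accumulated by those $\Theta(d)$ layers, all confined to $I$ and to the window since the last rectification of that part of $I$, pay for the comparatively few segments the coarsening writes. Monotonicity makes the bookkeeping consistent: once $\cB_i$ has been lifted past a level on $I$ it is never brought back down, so no $\cM$-event is charged twice; successive rectifications of a fixed $\cB_i$ act on essentially disjoint $(x,\text{time})$-windows; and different $\cB_i$'s draw on layers $\cM_k$ from disjoint index ranges.

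The main obstacle is precisely this charging argument: one must make rigorous that a local rectification over $I$ can be paid for by $\Omega(d)$ genuinely distinct, not-previously-charged $\cM$-update events inside $I$, uniformly over all $i$ and all times, using only the one-directional motion of the staircases; and, in tandem, the coarsening rule must be calibrated to $d$ so that a single $\cM$-update near layer $(i-1/2)d$ causes only an $O(1/d)$ amortized increase of $\cB_i$ while the sandwich of Condition~3 is kept intact. By comparison, the rounding of the half-integer indices and the degenerate cases in which a guard layer does not yet exist are routine.
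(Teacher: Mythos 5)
Your plan reproduces the paper's high-level strategy (one coarse staircase per block of $d$ layers, sandwiched between a lower and an upper guard, rebuilt lazily and paid for by the $\Theta(d)$ guarded layers), but the lemma's actual content is the $O(n/d)$ bound on the number of insertions and deletions into the $\cB_i$, and that is exactly the step you yourself flag as the ``main obstacle'' and do not carry out. This is not a routine verification: in the paper it occupies all of Appendices~B and~C, and it requires a carefully designed $\Rectify$ procedure together with a credit invariant maintained across calls, supported by structural facts such as Invariant~\ref{inv:empty} (no two consecutive ``empty'' segments of $\cB_i$) and its consequence Fact~\ref{fact:dom3} (a point of $\cM_{id+j}$ dominates at most two segments of $\cB_i$), which simultaneously bound how many $\cB_i$-segments a single $\cM$-update can threaten, how many fresh credits a threatened segment has accumulated, and how many segments one rectification writes. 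Without analogues of these your scheme has neither a bound on the cost of a rectification nor a guaranteed supply of payers, so the first claim of the lemma is simply not proved; and since your bound on the standing number of segments is itself derived from the insertion count, the second claim falls with it.

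The heuristic you offer for the charge also has concrete holes. First, a coarsening of the upper guard necessarily dips below it, so at the moment of a rectification some of the intermediate layers may already lie above $\cB_i$ over parts of the interval $I$; when the lower guard later reaches $\cB_i$ over $I$, those layers' crossing events predate the window and may already have been spent, so it is not true that all $\Theta(d)$ layers cross the old level ``inside the window.'' Second, a single $\replace(q,k)$ with $q$ high and far to the right lifts $\cM_k$ over a long $x$-range while inserting only $O(1)$ segments; that one event can be the lifting event for many later rectifications over pairwise disjoint intervals, so confinement ``to $I$ and to the window'' does not by itself make the charges distinct (the paper avoids this by giving each $\cM$-segment $O(1)$ credits and splitting them among the at most three $\cB_i$-segments it affects, which is where Fact~\ref{fact:dom3} is indispensable). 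Third, your static counting is against the wrong quantity: a sandwiched staircase cannot in general be built from ``one in every $d$ of the breakpoints of $\cM_{(i-1/2)d}$'' --- if all layers of the block are near-copies of one another with $N$ large steps, any staircase between the two guards needs $\Omega(N)$ breakpoints. The correct statement, and the one the paper proves and then reuses dynamically, is that every horizontal segment of $\cB_i$ can be made to span the left endpoints of $\Omega(d)$ segments of the intermediate layers, so $|\cB_i|$ is bounded by the total size of those $\Theta(d)$ layers divided by $d$; keeping a version of this density property alive under updates (via the empty-segment bookkeeping) is precisely the part of the argument your proposal leaves out.
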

We describe how staircases can be  maintained and prove Lemma~\ref{lemma:stair} in Appendix B.
\section{Efficient Algorithms for the Layers-of-Maxima Problem}
\label{sec:fin}
{\bf Word RAM Model.}
To conclude the description of our main algorithm, we need the following 
simple 
\begin{lemma}\label{lemma:small}
Using a $O(m)$ space data structure, we can locate a point 
in a group of $d$ staircases $\cM_j,\cM_{j+1},\ldots,
\cM_{j+d}$ in $O(\log d\cdot (\log \log m)^2)$ time, where $m$ is the 
number of segments in $\cM_j,\cM_{j+1},\ldots, \cM_{j+d}$.
An operation $\replace(q,i)$ is supported in $O((\log \log m)^2+m_q)$ 
time, where $m_q$ is the number of inserted and deleted segments in the 
staircase $\cM_i$, $j\leq i \leq i+d$.
\end{lemma}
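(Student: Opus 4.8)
The plan is to build a miniature version of the data structure of Lemma~\ref{lemma:slow} restricted to the $d+1$ staircases $\cM_j,\ldots,\cM_{j+d}$. Since we only have $d+1$ staircases, the binary tree $\cT$ on $x$-coordinates from Section~\ref{sec:fast} is overkill; instead I would keep, for each staircase $\cM_i$ in the group, only the data structure $D_i$ of Fact~\ref{fact:below}, implemented as an exponential tree on the $x$-coordinates of the endpoints of $\cM_i$. Each $D_i$ uses space linear in the number of segments of $\cM_i$, so the whole collection uses $O(m)$ space as required, and each $D_i$ supports finger updates in $O(1)$ time. The key observation is that the staircases $\cM_j,\ldots,\cM_{j+d}$ are linearly ordered by the ``below'' relation (Fact~\ref{fact:vert}): along the vertical line $x=q.x$ they are stacked monotonically. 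Hence locating $q$ among them amounts to a binary search on the index $i\in\{j,\ldots,j+d\}$: at each step we pick the middle index $i$, use $D_i$ to decide in $O((\log\log m)^2)$ time whether $\cM_i$ is below or above $q$ (Fact~\ref{fact:below} gives us exactly this test together with the segment of $\cM_i$ hit by the vertical ray through $q$), and recurse on the appropriate half.

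More precisely, I would maintain the invariant that after each step of the binary search the target staircase — the first one at or below $q$ — lies in a contiguous sub-range of indices. Starting with the full range $[j,j+d]$, repeatedly test the middle index: if $\cM_i$ is at or below $q$, the answer is $\le i$, otherwise it is $>i$. After $O(\log d)$ iterations the range collapses to a single index $i^\star$, and the segment of $\cM_{i^\star}$ returned by the last successful call to $D_{i^\star}$ is the predecessor segment; if every $\cM_i$ in the group is above $q$, we report that $q$ lies below the whole group, and if $q$ is below the rightmost point $m_i$ of $\cM_i$ (so $\cM_i$ is to the left of $q.x$) we handle it as in Section~\ref{sec:fast} by comparing $q.x$ with the $m_i.x$ values, again via binary search over the $d+1$ staircases. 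Each of the $O(\log d)$ iterations costs $O((\log\log m)^2)$, giving the claimed $O(\log d\cdot(\log\log m)^2)$ query time.

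For the update bound: $\replace(q,i)$ removes from $\cM_i$ all segments dominated by $q$ and inserts at most two new segments. All of this is local to $D_i$. Deleting the $m_q-O(1)$ covered segments and inserting the new ones into the exponential tree $D_i$ costs $O(m_q)$ using finger updates (the covered endpoints are consecutive in $D_i$, so after locating the boundary once in $O((\log\log m)^2)$ time the remaining deletions each take $O(1)$), for a total of $O((\log\log m)^2+m_q)$. The operation $\new(q,i)$ creates a fresh $D_i$ with two segments in $O(1)$ time and increments the count of staircases in the group.

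I expect the only real subtlety — not a deep obstacle, but the point that needs care — to be the correctness of the binary search at the boundary cases: the possibility that $q$ is below all of $\cM_j,\ldots,\cM_{j+d}$, or that for the tested index $i$ the vertical line $x=q.x$ misses $\cM_i$ entirely (i.e. $q.x$ exceeds $m_i.x$). Fact~\ref{fact:below} already resolves the first kind of comparison correctly, and for the second I would fall back on the $m_i.x$ comparison exactly as in Section~\ref{sec:fast}; since both the ``below'' relation and the $m_i.x$ values are monotone in $i$ by Fact~\ref{fact:vert} and Fact~\ref{fact:monot}, the binary search remains valid. Everything else is a direct specialization of Lemma~\ref{lemma:slow} to a constant-depth (indeed, no) search tree.
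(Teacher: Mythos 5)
Your proposal is correct and follows essentially the same route as the paper: the paper's proof is exactly a binary search over the $d$ staircases using the test of Fact~\ref{fact:below}, with the update bound coming from the $O(1)$ finger updates of the exponential trees $D_i$. Your extra care about the boundary cases and the amortized $O((\log\log m)^2+m_q)$ cost of $\replace$ only spells out details the paper leaves implicit.
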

\begin{proof}
We can use Fact~\ref{fact:below} to determine whether a staircase 
is above or below a staircase $\cM_k$ for any $j\leq k \leq j+d$. 
Hence, we can locate a point in $O(\log d\cdot (\log \log m)^2)$ 
time by a binary search among $d$ staircases.
\end{proof}
We set $d=\log^2 n$. 
The data structure $F_i$ contains all segments of staircases 
$\cM_{(i-1)d+1},\cM_{(i-1)d+2},\ldots,\cM_{id}$ for 
$i=1,2,\ldots,j$, where $j=\lfloor l/d\rfloor$ and $l$ is the highest 
index of a staircase; the data structure $F_{j+1}$ 
contains all segments of staircases $\cM_{jd+1},\ldots, \cM_l$. 
We can locate a point $q$ in each $F_i$ in $O((\log \log n)^3)$ time 
by Lemma~\ref{lemma:small}. 
Since each staircase belongs to one data structure, all 
$F_i$ use $O(n)$ space. 
We also maintain additional staircases $\cB_i$ as described in 
section~\ref{sec:stair}.
All segments of all staircases $\cB_i$ are stored in the 
data structure $\cD$ of Lemma~\ref{lemma:slow}; since $\cD$ contains 
$O(n/d)$ segments, the space usage of $\cD$ is $O(n)$.  

Now we can describe how operations $\locate$, $\replace$, $\new$ can 
be implemented in $O((\log \log n)^3)$ time per segment. 
\begin{itemize*}
\item
$\locate(q)$: We find the index $k$, such that $q$ is between  
$\cB_{k-1}$ and $\cB_k$ in $O((\log \log n)^3)$ time. 
As described in section~\ref{sec:stair}, $q$ is between $\cM_{kd+g}$ 
and $\cM_{(k-1)d-g}$. Hence, we can use  data structures
$F_{k+1}$, $F_k$, and $F_{k-1}$ to identify $j$ such that 
$q$ is between  $\cM_j$ and $\cM_{j+1}$. 
Searching $F_{k+1}$, $F_k$, and $F_{k-1}$ takes $O((\log \log n)^3)$ 
time, and the total time for $\locate(q)$ is $O((\log \log n)^3)$.
\item
$\replace(q, i)$: let $m_q$ be the number of inserted and deleted 
segments. The data structure $F_{\lfloor i/d \rfloor}$ 
can be updated in $O(m_q+ (\log\log n)^2)$ time. We may also have to update 
$\cB_{\lfloor i/d \rfloor}$, $\cB_{\lfloor i/d \rfloor+1}$, and the data structure
 $\cD$.
\item
$\new(q,l)$: If $l=kd+1$ for some $k$, a new data structure $F_{k+1}$ is
 created. We add the horizontal segment of the new staircase into 
the data structure $F_{k+1}$. 
If $l=kd$, we create a new staircase $\cB_k$ and add the segments of 
$\cB_k$ into the data structure $\cD$.
\end{itemize*}
There are $O(n/d)$ update operations on the data structure $\cD$
that can be performed in $O((n/d)\log n(\log\log n)^2)=O(n)$ time. 
If we ignore the time to update $\cD$, then $\replace(q,i)$ takes 
$O(m_q(\log\log n)^2)$ time and $\new(q,l)$ takes $O((\log\log n)^2)$ time. 
Since $\sum_{q\in S} m_q=O(n)$ and $\new(q,l)$ is performed 
at most $n$ times, the algorithm runs in $O(n (\log \log n)^3)$ 
time. We thus obtain the main result of this paper.
\begin{theorem}\label{theor:determ}
The three-dimensional layers-of-maxima problem can be solved in 
$O(n (\log \log n)^3 )$ deterministic time in the word RAM model. 
The space usage of the algorithm is $O(n)$.
\end{theorem}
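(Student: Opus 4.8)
The plan is to assemble the components built in Sections~\ref{sec:over}--\ref{sec:fin}, since each required ingredient has already been established. First I would pass to the rank space: sorting $S_x$, $S_y$, $S_z$ with the algorithm of~\cite{H04} costs $O(n\log\log n)$ time and $O(n)$ space and, as explained in Section~\ref{sec:over}, produces an equivalent instance $S'$ all of whose coordinates lie in $[1,n]$ (a point $p'$ lands on layer $i$ of $S'$ iff $\tau^{-1}(p')$ lands on layer $i$ of $S$). It therefore suffices to run the sweep plane algorithm of Section~\ref{sec:over} on coordinates in $[1,n]$; recall that this algorithm only performs the operations $\locate$, $\replace(q,i)$ and $\new(q,l)$ on the staircases $\cM_i$, that $\sum_{q\in S}m_q=O(n)$ where $m_q$ is the number of segments inserted into and deleted from the affected staircase during $\replace(q,i)$ (each segment of each $\cM_i$ is created and destroyed at most once), and that there are at most $n$ calls to $\new$ and exactly $n$ calls to $\locate$.

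Next I would fix $d=\log^2 n$ and instantiate the three-level structure of Section~\ref{sec:fin}. The data structures $F_1,F_2,\ldots$ partition the staircases into blocks of $d$ consecutive staircases, with $F_i$ holding the segments of $\cM_{(i-1)d+1},\ldots,\cM_{id}$ and supporting Lemma~\ref{lemma:small}; since the blocks partition the segments of the $\cM_i$, all $F_i$ together use $O(n)$ space. In parallel I would maintain the auxiliary staircases $\cB_1,\ldots,\cB_m$ of Section~\ref{sec:stair} satisfying conditions 1--3, in particular the property that locating a query point $q$ between $\cB_{k-1}$ and $\cB_k$ pins the location of $q$ to the band $\cM_{(k-3/2)d},\ldots,\cM_{(k+1/2)d}$, and I would store all horizontal segments of all $\cB_i$ in the data structure $\cD$ of Lemma~\ref{lemma:slow}. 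By Lemma~\ref{lemma:stair}, $\cD$ holds only $O(n/d)$ segments, so $\cD$ uses $O((n/d)\log n)=O(n)$ space, and the whole structure occupies $O(n)$ space.

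Then I would implement the operations. For $\locate(q)$: a ray-shooting query in $\cD$ (Lemma~\ref{lemma:slow}, $O((\log\log n)^3)$ time) returns the index $k$ with $q$ between $\cB_{k-1}$ and $\cB_k$; by condition 3 the answer lies among $O(d)$ consecutive staircases $\cM_i$, which are covered by $O(1)$ consecutive blocks (concretely $F_{k-1},F_k,F_{k+1}$, since each block spans exactly $d$ staircases); a binary search inside each of these via Lemma~\ref{lemma:small} costs $O(\log d\cdot(\log\log n)^2)=O((\log\log n)^3)$ and returns the exact layer. For $\replace(q,i)$: update the single block $F_{\lfloor i/d\rfloor}$ in $O(m_q+(\log\log n)^2)$ time by Lemma~\ref{lemma:small}, and additionally perform the induced $O(1)$ changes to the auxiliary staircases $\cB_{\lfloor i/d\rfloor}$, $\cB_{\lfloor i/d\rfloor+1}$ and propagate them to $\cD$. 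For $\new(q,l)$: when $l=kd+1$ create a new block $F_{k+1}$ and insert the single horizontal segment of the new staircase; when $l=kd$ create a new $\cB_k$ and insert its $O(1)$ segments into $\cD$. For the time bound, Lemma~\ref{lemma:stair} guarantees only $O(n/d)$ updates of $\cD$, each costing $O(\log n(\log\log n)^2)$ by Lemma~\ref{lemma:slow}, hence $O((n/d)\log n(\log\log n)^2)=O(n)$ in total for $d=\log^2 n$; ignoring these amortized-away costs, $\replace(q,i)$ costs $O(m_q(\log\log n)^2)$ and $\new(q,l)$ costs $O((\log\log n)^2)$. Summing, the $n$ $\locate$ calls cost $O(n(\log\log n)^3)$, the $\replace$ calls cost $O\bigl((\log\log n)^2\sum_q m_q\bigr)=O(n(\log\log n)^2)$, the $\new$ calls cost $O(n(\log\log n)^2)$, and adding the $O(n\log\log n)$ preprocessing yields $O(n(\log\log n)^3)$ time and $O(n)$ space.

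The genuinely hard work --- the point-location structure of Section~\ref{sec:fast} behind Lemma~\ref{lemma:slow} and, above all, the low-update-cost maintenance of the auxiliary staircases $\cB_i$ behind Lemma~\ref{lemma:stair} --- is exactly what is deferred to that section and to Appendix B; granting those, the proof of Theorem~\ref{theor:determ} is a routine orchestration. The only points that still need care are verifying that the band width $O(d)$ from condition 3 is always covered by $O(1)$ blocks $F_i$ (it is, because consecutive blocks each span exactly $d$ staircases) and that the bookkeeping for creating new blocks $F_i$ and new auxiliary staircases $\cB_k$ --- triggered at indices $l=kd+1$ and $l=kd$ respectively --- stays consistent with the progress of the sweep.
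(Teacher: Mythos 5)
Your proposal is correct and follows essentially the same route as the paper: rank-space reduction via the sort of~\cite{H04}, blocks $F_i$ of $d=\log^2 n$ staircases searched via Lemma~\ref{lemma:small}, auxiliary staircases $\cB_i$ stored in the structure $\cD$ of Lemma~\ref{lemma:slow}, and Lemma~\ref{lemma:stair} amortizing the $O(n/d)$ updates of $\cD$ at $O(\log n(\log\log n)^2)$ each into $O(n)$ total time. The only slight imprecision is calling the changes to the $\cB_i$ induced by a single $\replace(q,i)$ (and the segments of a newly created $\cB_k$) ``$O(1)$'' --- individually they need not be constant --- but since you charge all $\cD$-updates against the $O(n/d)$ total guaranteed by Lemma~\ref{lemma:stair}, exactly as the paper does, the accounting is unaffected.
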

If we use Fact~\ref{fact:randbelow} instead of Fact~\ref{fact:below} 
in the proof of Lemma~\ref{lemma:small} and Lemma~\ref{lemma:slowrand} 
instead of Lemma~\ref{lemma:slow} in the proof of Theorem~\ref{theor:determ}, 
we  obtain a slightly better randomized algorithm. 
\begin{theorem}
The three-dimensional layers-of-maxima problem 
can be solved in $O(n (\log \log n)^2 )$ expected time. 
The space usage of the algorithm is $O(n)$.
\end{theorem}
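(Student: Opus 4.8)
The plan is to mirror the proof of Theorem~\ref{theor:determ} essentially verbatim, substituting the randomized one-dimensional building blocks for the deterministic ones and then re-tallying the costs. Concretely, I would reprove Lemma~\ref{lemma:small} using Fact~\ref{fact:randbelow} in place of Fact~\ref{fact:below}: each test ``is $\cM_k$ above or below $q$?'' now costs $O(\log\log m)$ expected rather than $O((\log\log m)^2)$, so the binary search over the $d$ staircases of a group locates a point in $O(\log d\cdot \log\log m)$ expected time and $\replace(q,i)$ costs $O(\log\log m + m_q)$ expected. Likewise I would invoke Lemma~\ref{lemma:slowrand} in place of Lemma~\ref{lemma:slow} for the data structure $\cD$ that stores the segments of the auxiliary staircases $\cB_i$: ray shooting in $\cD$ costs $O((\log\log n)^2)$ (expected) and an update costs $O(\log n\log\log n)$ expected per segment. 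The auxiliary-staircase machinery of Section~\ref{sec:stair} and Lemma~\ref{lemma:stair} is purely combinatorial and is reused unchanged, so invariants 1--3 on the $\cB_i$ still hold; only the cost of realizing them changes.

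With $d=\log^2 n$ as before, I would re-examine the three operations. For $\locate(q)$: finding the index $k$ with $q$ between $\cB_{k-1}$ and $\cB_k$ is a ray shooting query in $\cD$, now $O((\log\log n)^2)$ expected; then searching $F_{k+1},F_k,F_{k-1}$ costs $O(\log d\cdot\log\log n)=O((\log\log n)^2)$ expected by the new Lemma~\ref{lemma:small}; so $\locate$ runs in $O((\log\log n)^2)$ expected time. For $\replace(q,i)$: updating $F_{\lfloor i/d\rfloor}$ costs $O(m_q+\log\log n)$ expected, plus the updates of $\cB_{\lfloor i/d\rfloor}$, $\cB_{\lfloor i/d\rfloor+1}$, and $\cD$. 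For $\new(q,l)$: $O(\log\log n)$ expected, ignoring the cost of touching $\cD$.

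Then I would sum. By Lemma~\ref{lemma:stair} the data structure $\cD$ undergoes $O(n/d)$ updates, each $O(\log n\log\log n)$ expected per segment, for a total of $O((n/d)\log n\log\log n)=O(n\log\log n/\log n)=o(n)$ expected time. Ignoring $\cD$, the $\replace$ calls cost $\sum_{q\in S}O(m_q+\log\log n)$; since $\sum_q m_q=O(n)$ and there are at most $n$ points this is $O(n\log\log n)$ expected, and the at most $n$ calls to $\new$ add another $O(n\log\log n)$. The $n$ calls to $\locate$ dominate, at $O(n(\log\log n)^2)$ expected. Adding the $O(n\log\log n)$ reduction to rank space of Section~\ref{sec:over}, the total is $O(n(\log\log n)^2)$ expected. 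For space: each staircase $\cM_j$ lies in exactly one $F_i$, so the $F_i$ together use $O(n)$; $\cD$ holds $O(n/d)$ segments and hence $O(n)$ space by Lemma~\ref{lemma:slowrand}; and maintaining the $\cB_i$ needs $O(n/d)$ space by Lemma~\ref{lemma:stair}. So the space is $O(n)$.

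The only point needing care --- and the main ``obstacle'', such as it is --- is the bookkeeping of expectations: every $\log\log n$ factor coming from a y-fast trie is an expected bound (hashing in the bottom buckets and in the finger-update mechanism), so I must be sure the way these bounds are combined is legitimate, i.e.\ that I only ever add up expectations of per-operation costs and never multiply one random cost by another random quantity. Since the number of operations of each type ($n$ queries, at most $n$ calls to $\new$, and a family of $\replace$ calls with $\sum_q m_q=O(n)$) is a deterministic quantity independent of the random choices, linearity of expectation applies directly and the $O(n(\log\log n)^2)$ expected total follows with no concentration argument needed.
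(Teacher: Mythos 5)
Your proposal is correct and follows essentially the same route as the paper, which proves this theorem exactly by substituting Fact~\ref{fact:randbelow} for Fact~\ref{fact:below} in Lemma~\ref{lemma:small} and Lemma~\ref{lemma:slowrand} for Lemma~\ref{lemma:slow} in the proof of Theorem~\ref{theor:determ}; your cost re-tally (including the $O((n/d)\log n\log\log n)=o(n)$ bound for updating $\cD$ with $d=\log^2 n$) is the intended accounting, just spelled out in more detail than the paper gives.
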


{\bf Pointer Machine Model.}
We can apply the idea of additional staircases to obtain  
an $O(n\log n)$ algorithm in the pointer machine model. This time, we set 
$d=\log n $ and maintain additional staircases $\cB_i$ as described 
in section~\ref{sec:stair}.
Horizontal  segments of all $\cB_i$ are stored in the data 
structure $\cD$ of Giyora and Kaplan~\cite{GK09} that uses 
$O(m\log^{\eps} m)$ space and
supports queries and updates in $O(\log m)$ and $O(\log^{1+\eps} m)$ 
time respectively, where $m$ is the number of segments 
in all $\cB_i$ and $\eps$ is an arbitrarily small  positive constant. 
Using dynamic fractional cascading~\cite{MN90}, we can implement $F_i$ so that 
$F_i$ uses linear space and answers queries 
in $O(\log n + \log \log n \log d)= O(\log n)$ time.
Updates are supported in $O(\log n)$ time; 
details will be given in the 
full version of this paper. 
Using $\cD$ and $F_i$, we can implement the 
sweep plane algorithm in the same way as described in the 
first part of this section. The space usage of all data structures 
$F_i$ is $O(n)$, and all updates of $F_i$ take $O(n\log n)$ time. 
By Lemma~\ref{lemma:stair}, the data structure $\cD$ is updated 
$O(n/\log n)$ times; hence all updates of $\cD$ take $O(n\log n)$ time. 
The space usage of $\cD$ is $O(m\log^{\eps} m)= O(n)$. 
Each new point is located by answering one query to $\cD$ and at 
most three queries to $F_i$; hence, a new point is assigned 
to its layer of maxima in $O(\log n)$ time. 
\begin{theorem}
A three-dimensional layers-of-maxima problem can be solved in 
$O(n  \log n)$ time in the pointer machine model. The space usage of
 the algorithm is $O(n)$.
\end{theorem}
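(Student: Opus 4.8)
The plan is to transplant the two-level scheme of the word RAM algorithm into the pointer machine model, replacing each component that exploited word operations. Set $d=\log n$ and maintain the auxiliary staircases $\cB_1,\ldots,\cB_m$, $m\le n/d$, exactly as in Section~\ref{sec:stair}; by Lemma~\ref{lemma:stair} they contain only $O(n/\log n)$ horizontal segments altogether and the sweep performs only $O(n/\log n)$ segment insertions and deletions on them. Store all these segments in the structure $\cD$ of Giyora and Kaplan~\cite{GK09}, which on a set of $m'$ horizontal segments uses $O(m'\log^{\eps}m')$ space and answers ray shooting queries in $O(\log m')$ time and updates in $O(\log^{1+\eps}m')$ time; here $m'=O(n/\log n)$ by Lemma~\ref{lemma:stair}. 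Group the original staircases into consecutive blocks of $d$ staircases, the $i$-th block $\cM_{(i-1)d+1},\ldots,\cM_{id}$ being served by a structure $F_i$ (with the final, possibly shorter, block handled by $F_{j+1}$); every staircase lies in exactly one block, so the $F_i$ act on disjoint segment sets.

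First I would describe $F_i$. In the pointer machine model the word-RAM structure behind Fact~\ref{fact:below} is unavailable, so a naive binary search over the $d$ staircases of a block with balanced search trees would cost $\Theta(\log d\cdot\log n)=\Theta(\log n\log\log n)$ per query, which over $n$ queries is too slow. Instead I would build over the $d$ staircases of the block a balanced catalog tree of height $O(\log d)$, each node storing the sorted list of $x$-coordinates of the segment endpoints of one staircase --- so that, by the monotonicity of Fact~\ref{fact:monot}, a search in that list locates the segment of that staircase crossed by the line $x=q.x$ and hence decides whether the staircase is above or below $q$ --- and I would link catalogs of adjacent tree nodes by dynamic fractional cascading~\cite{MN90}. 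A root-to-leaf search, which realizes precisely the binary search among the $d$ staircases, then costs $O(\log n)$ to search the root catalog plus $O(\log\log n)$ per subsequent level, i.e.\ $O(\log n+\log\log n\cdot\log d)=O(\log n)$ time; a segment insertion or deletion affects a single catalog and is handled in $O(\log n)$ time by dynamic fractional cascading; and the structure uses space linear in the number of segments of its block, so the $F_i$ together use $O(n)$ space. Operations $\replace(q,i)$ and $\new(q,l)$ are routed to the affected $F_i$ in the same manner as the word RAM algorithm (and when $\new$ creates a staircase whose index is a multiple of $d$, a new $\cB$-staircase is created and its segments are inserted into $\cD$).

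The sweep is then run as in the first part of this section. For a point $q$, one ray shooting query to $\cD$ returns, in $O(\log n)$ time, the index $k$ with $q$ lying between $\cB_{k-1}$ and $\cB_k$; by Condition~3 of Section~\ref{sec:stair} the true position of $q$ is between $\cM_{k'}$ and $\cM_{k'+1}$ for some $k'$ in an interval of length $2d$, which meets at most the three blocks $F_{k-1},F_k,F_{k+1}$, so $q$ is located among the original staircases by at most three queries to these $F_i$, costing $O(\log n)$ in all. Hence each of the $n$ points is assigned to its layer in $O(\log n)$ time, giving $O(n\log n)$ total query time. For the updates, the sweep performs $\sum_{q\in S}m_q=O(n)$ segment changes on the $F_i$, costing $O(n\log n)$ altogether, and $O(n)$ further $O(\log n)$-time $\new$ operations, while by Lemma~\ref{lemma:stair} it performs only $O(n/\log n)$ segment changes on $\cD$, costing $O((n/\log n)\log^{1+\eps}n)=o(n\log n)$. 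For space, $\cD$ uses $O(m'\log^{\eps}m')=O((n/\log n)\log^{\eps}n)=O(n)$, the $F_i$ use $O(n)$, and the remaining sweep bookkeeping is linear; thus the algorithm runs in $O(n\log n)$ time and $O(n)$ space.

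The step I expect to be the real obstacle is the dynamic fractional cascading realization of $F_i$: one must design the catalog tree and the cascading so that a genuine \emph{binary search} among the $d$ staircases (rather than a search in every catalog) runs in $O(\log n)$, so that a single segment change is confined to one catalog and absorbed in $O(\log n)$, and so that the whole structure stays within linear space on a pointer machine, where none of the word-RAM shortcuts of Section~\ref{sec:fast} are permitted; verifying that Condition~3 restricts the second phase of a query to $O(1)$ blocks is the routine but necessary companion point. Full details are deferred to the full version.
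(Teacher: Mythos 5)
Your proposal is correct and follows essentially the same route as the paper: $d=\log n$, the auxiliary staircases $\cB_i$ with Lemma~\ref{lemma:stair} bounding the $O(n/\log n)$ segments and updates to the Giyora--Kaplan structure $\cD$, block structures $F_i$ realized by dynamic fractional cascading with query time $O(\log n+\log\log n\cdot\log d)=O(\log n)$, and the one-query-to-$\cD$ plus at most three-queries-to-$F_i$ location scheme. Your sketch of the catalog tree behind $F_i$ is a reasonable filling-in of a detail the paper itself defers to its full version, so there is nothing to correct.
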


\section*{Acknowledgment}
The author wishes to thank an anonymous reviewer of this paper for a 
stimulating comment that helped to obtain the randomized version 
of the presented algorithm.

\newpage

\section*{Appendix A. Figures}

\begin{figure}[htb]
  \centering
  \includegraphics[width=.6\textwidth]{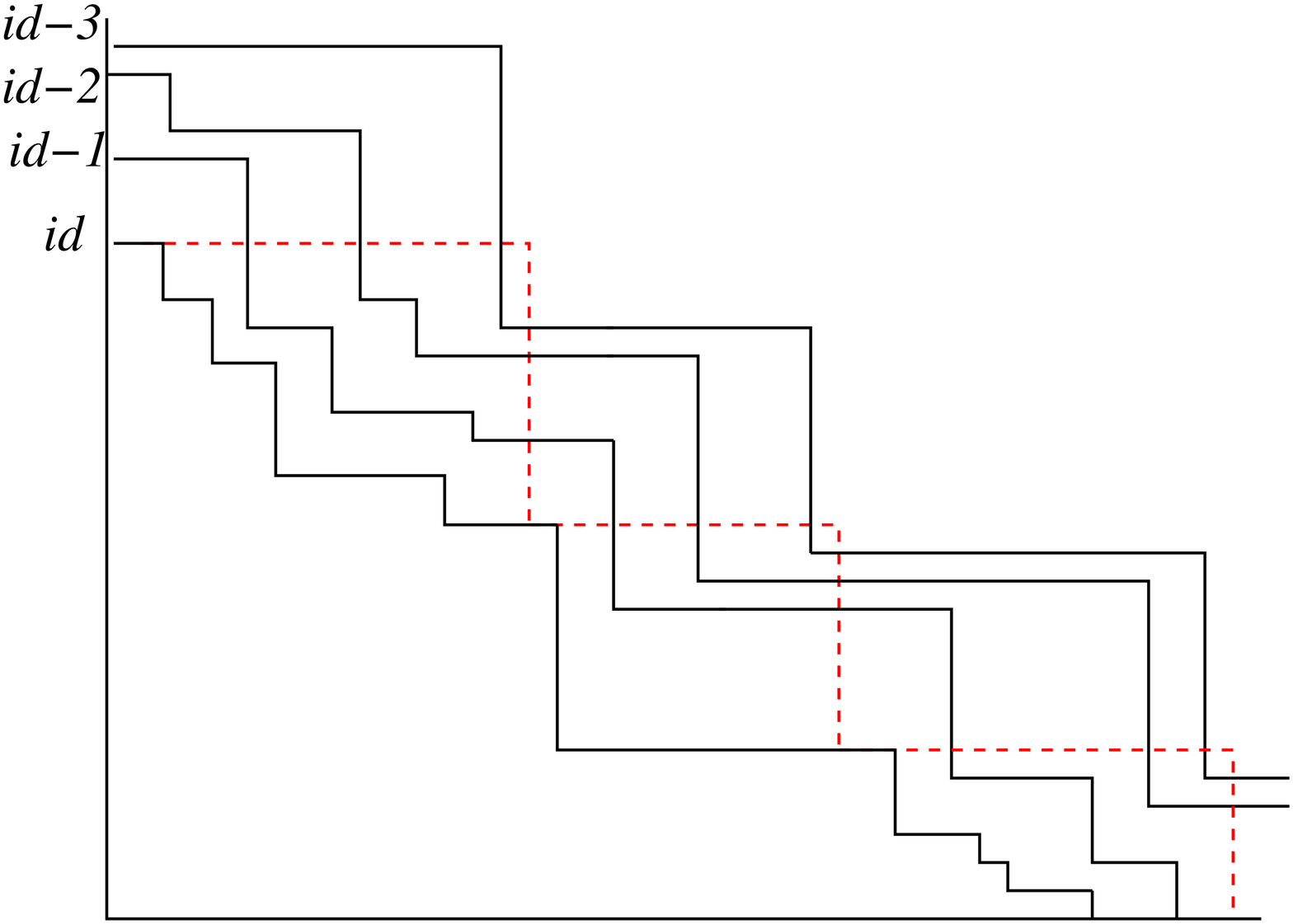}
\caption{
  Example of a just constructed additional staircase $\cB_i$ for $d=6$. 
  The staircase $\cB_i$ is shown with dashed red lines. 
  Staircases are denoted by their indexes. 
 }
  \label{fig:stair1}
\end{figure} 

\begin{figure}[bt]
  \centering
\includegraphics[width=.4\textwidth]{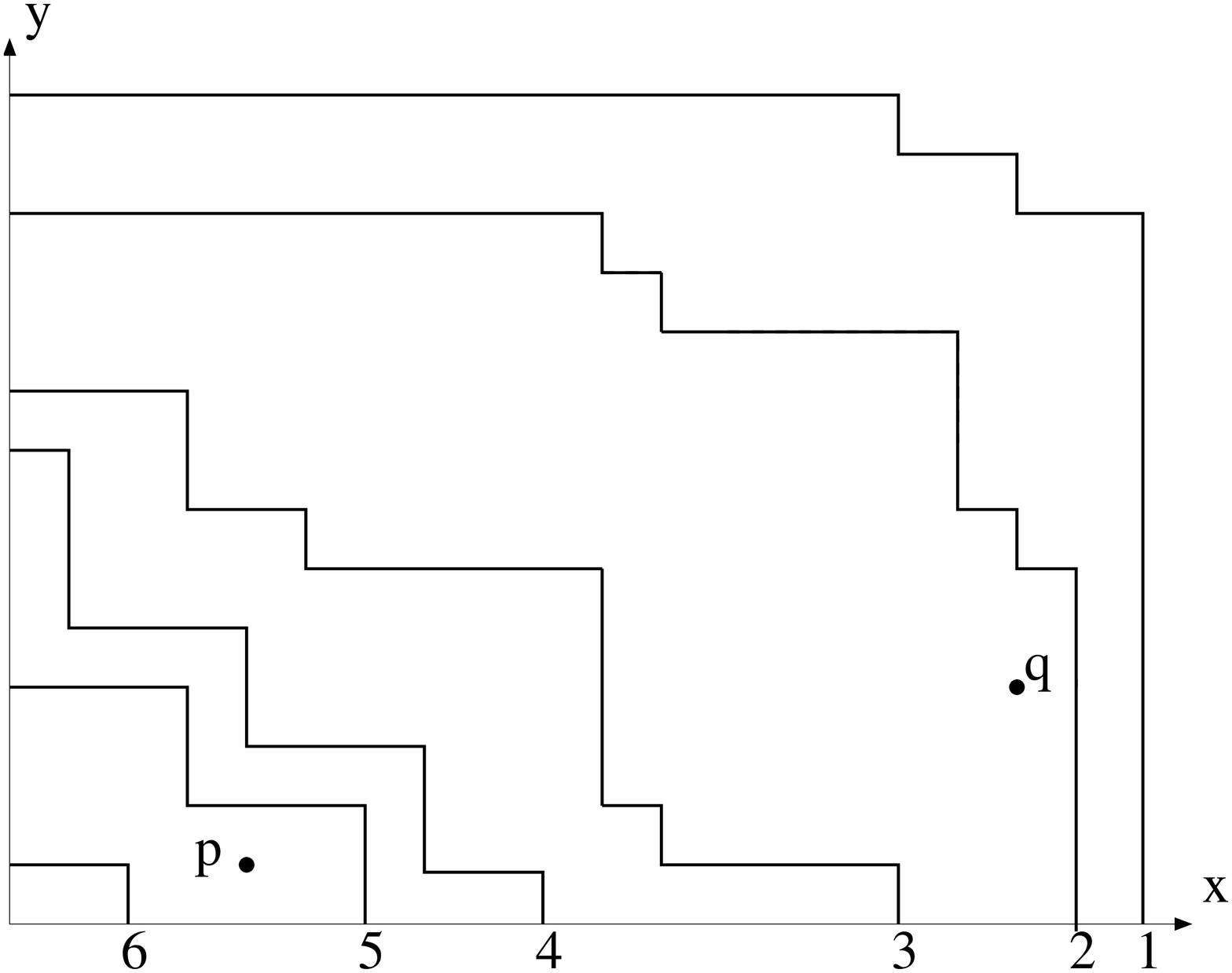}
\caption{There are no staircases below $p$ and $q$.}
\label{fig:nobelow}
\end{figure}

\begin{figure}[tbh]
  \centering
  \begin{tabular}{ccc}
  \includegraphics[width=.45\textwidth]{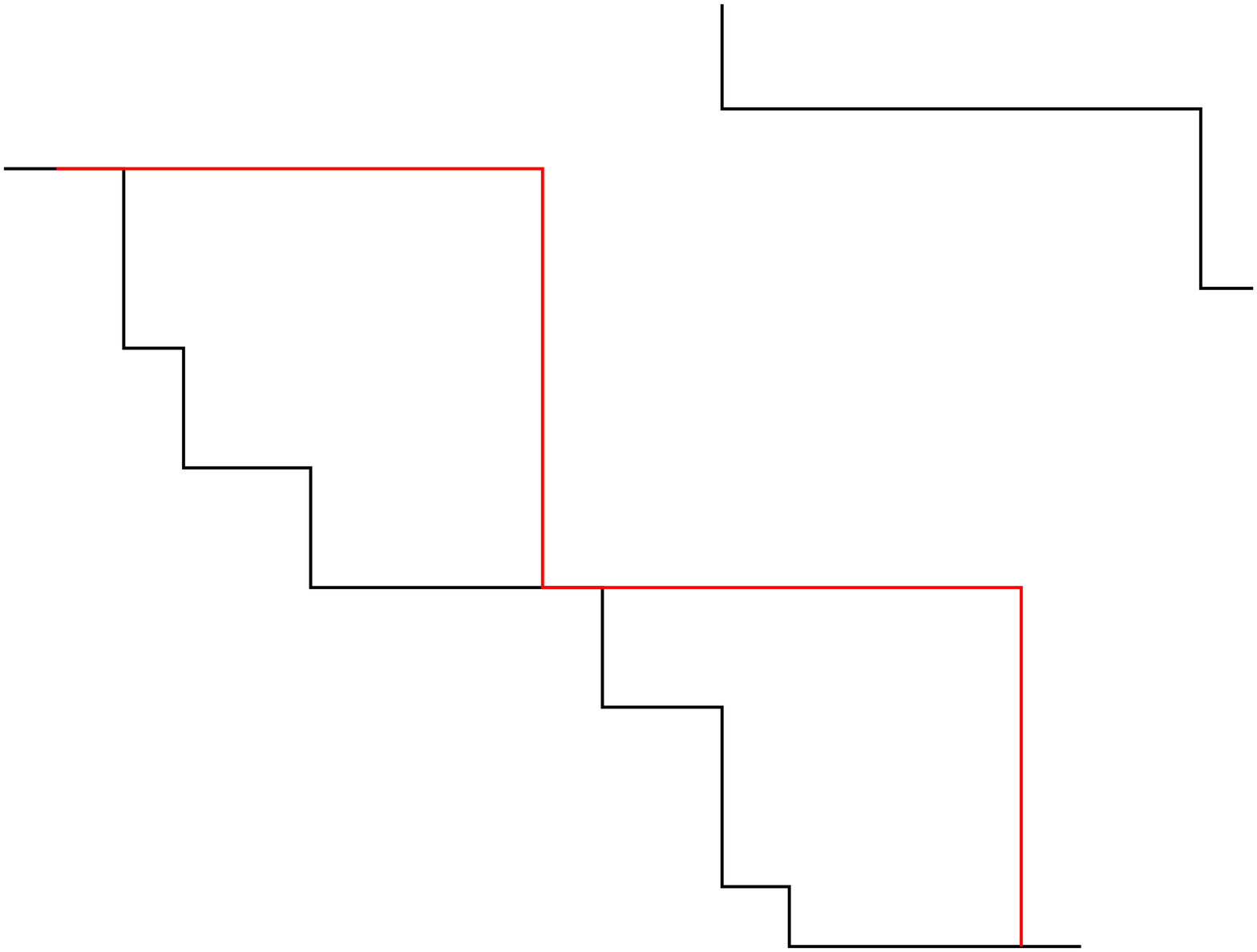} & \hspace*{.7cm} &
  \includegraphics[width=.45\textwidth]{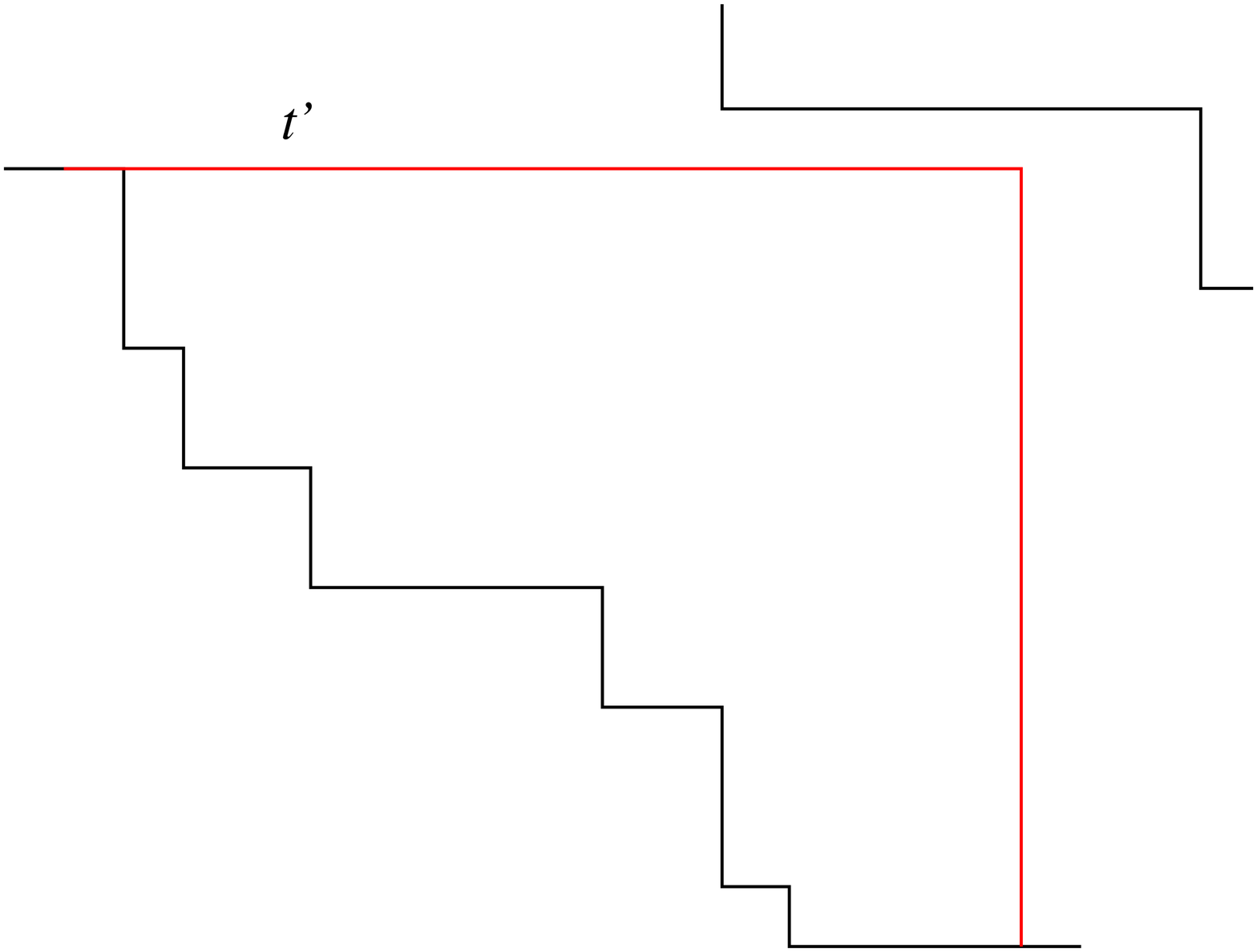} \\
  {\bf (a)}   &  & {\bf(b)}  \\
  \includegraphics[width=.45\textwidth]{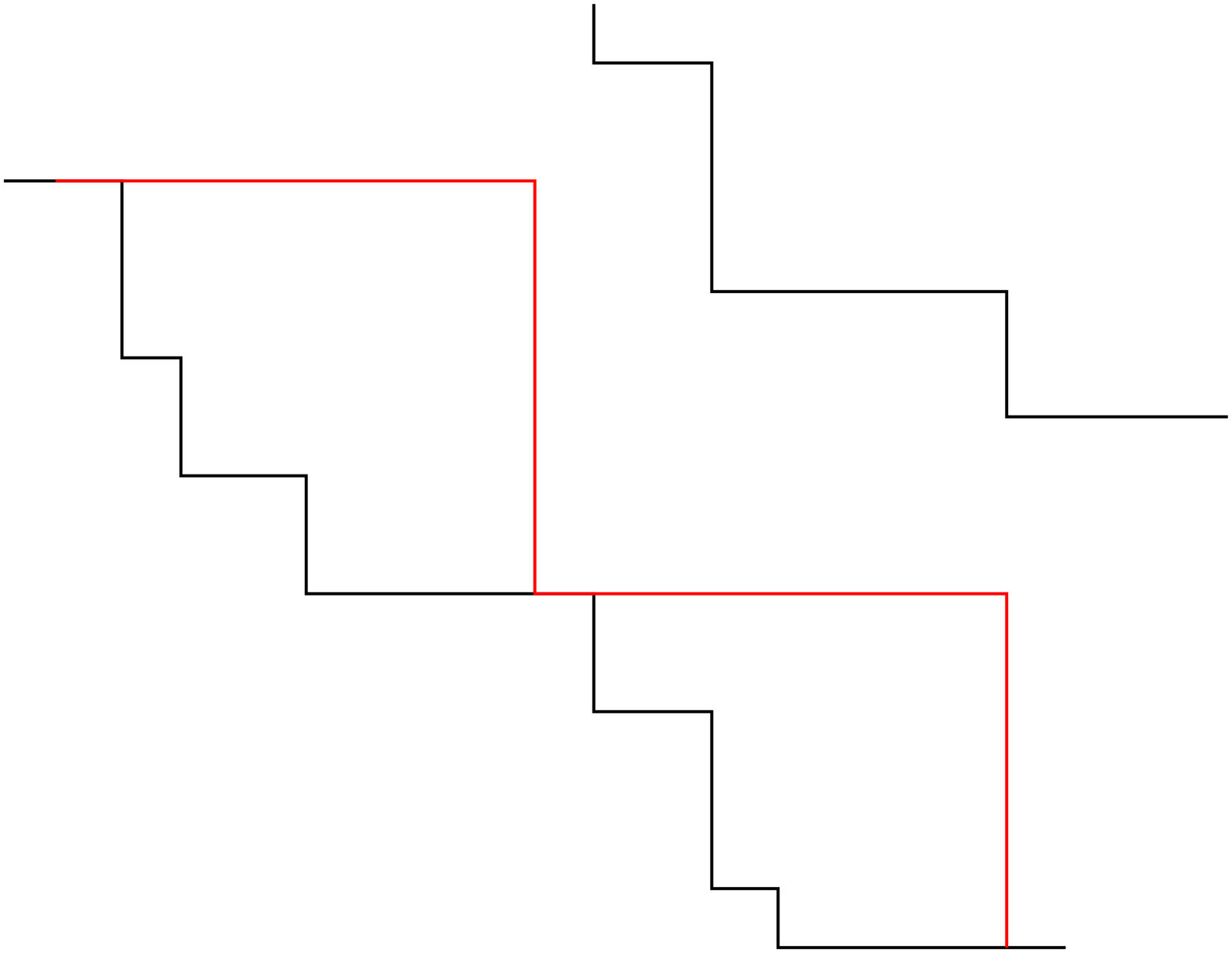} & \hspace*{.7cm} &
  \includegraphics[width=.45\textwidth]{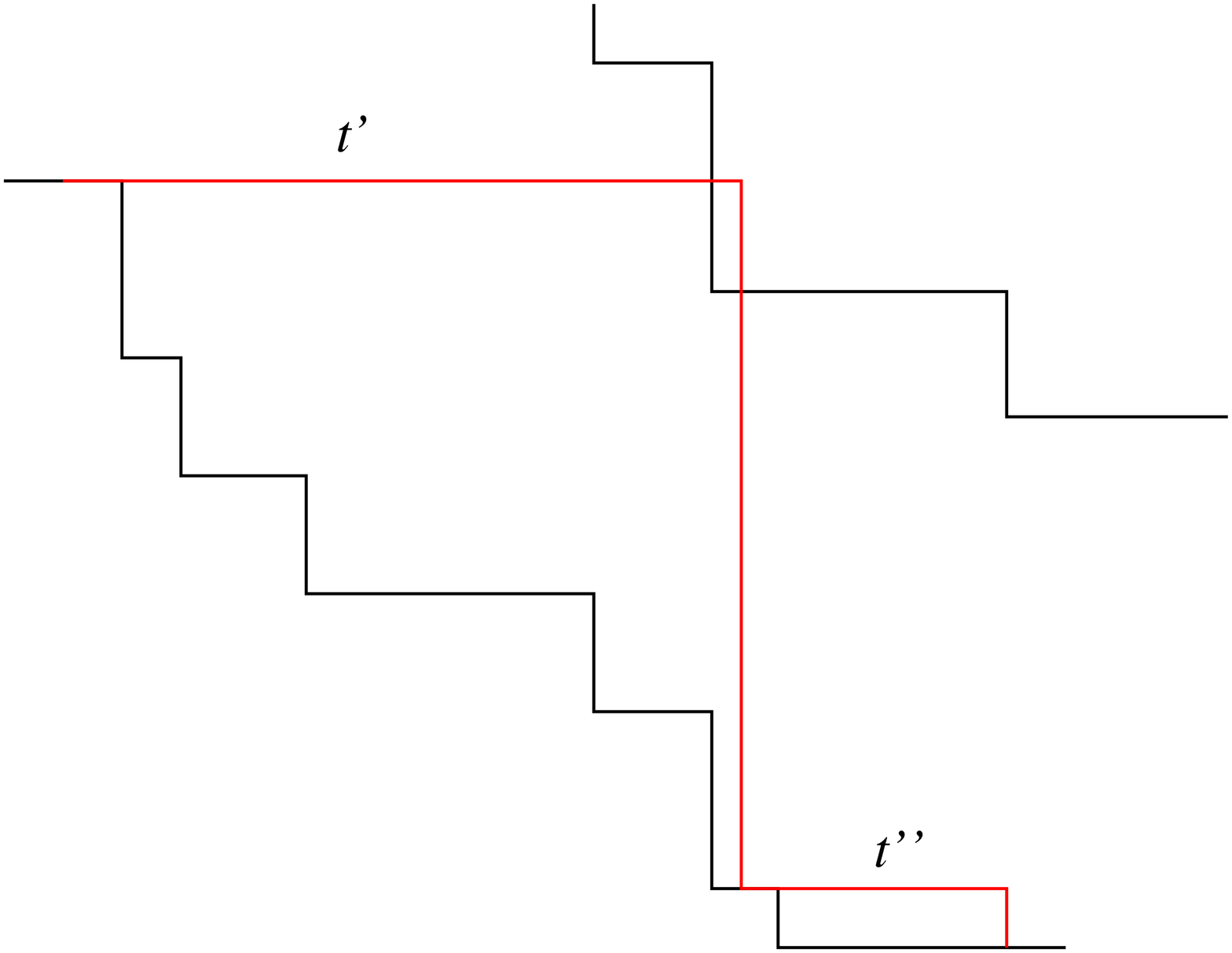} \\
  {\bf (c)}   &  & {\bf(c)}  \\
  \end{tabular}
  \caption{{Maintaining Invariant~\ref{inv:empty}. For simplicity only 
      several segments of $\cM_{id}$, $\cM_{id-g}$, and $\cB_i$ are shown; 
      $\cB_i$ is drawn in red color. \bf (a) \& (b):} Two empty segments are
    replaced with one (possibly empty) segment $t'$. 
    {\bf (c) \& (d):}  Two empty segments are replaced with one non-empty
    segment $t'$ and one empty segment $t''$. 
  }
  \label{fig:2empty}
\end{figure}
\newpage

\section*{Appendix B. Proof of Lemma~\ref{lemma:stair}}

In the first part of this section we describe the construction procedure 
of a boundary $\cB_i$. 
Then, we will prove some facts about 
$\cB_i$ and describe the update procedure. In the last part of this section 
we will prove that all $\cB_i$ are
 updated $O(1)$ times for $d$ updates of $\cM_i$. 

{\bf Construction of  Additional  Staircases.}
We construct one staircase $\cB_i$ for $d$ staircases 
$\cM_{(i-1)d+1},\ldots,\cM_{id}$. 
Let $p$ be the starting point of the
 staircase $\cM_{id}$, i.e., $p\in \cM_i$ and $p.x=1$. 
The staircase  $\cB_i$ is the path traced by $p$ as we alternatively
 move $p$ in the $+x$ and $-y$ direction until it hits the $x$-axis.

A segment $s$ \emph{covers} a point $p$ if $\sleft(x) \leq p.x \leq \sright(s)$.
A segment $r$ is \emph{related} to a segment $s$ if $s$ covers the left 
endpoint of $r$; a segment $s$ \emph{covers}  a segment $r$ if
 $\sleft(s)\leq \sleft(r)$
 and $\sright(r)\leq \sright(s)$. A point $p$ \emph{dominates} 
a segment $s$ if $p$ dominates the left endpoint of $s$. A segment $s$ 
\emph{follows} the segment $r$ in a staircase $\cB_i$ or $\cM_i$ 
(resp.\ $r$ \emph{precedes} s) if 
both $r$ and $s$ belong to the same staircase and $\sright(r)=\sleft(s)$.

Let $g=d/2$. For convenience we assume that each point $q\in S$ has 
even $x$-coordinate. This is achieved by replacing each point 
$q=(q.x,q.y)$ with a point $q'=(2q.x,q.y)$. Endpoints of all 
segments of $\cB_i$ will have odd $x$-coordinates.
The set $G_i$ contains all  segments of 
$\cM_{id-g},\ldots,\cM_{id}$. 
The staircase is constructed by repeating the following steps 
until $p$ hits the $x$-axis or the $x$-coordinate of $p$ is maximal possible,
 i.e. until $p.y=0$ or $p.x=2n$: \\
(1) We move $p$ in the $+x$ direction until $p$ cuts $\cM_{id-g}$, 
i.e until $p.x=\sleft(s)+1$ for a segment $s\in\cM_{id-g}$ such that 
$y(s)<p.y$\\
(2) If $p.x< 2n$, we move $p$ in $-y$ direction until it hits 
a segment of $\cM_{id}$ or $p.y =0$. \\
Observe that at the beginning of step $(1)$ the point $p$ always belongs 
to a horizontal segment of $\cM_{id}$. Hence, a point on $\cM_{id+1}$ does 
not dominate a segment of $\cB_i$. 
Since each horizontal segment 
of $\cB_i$ cuts $\cM_{id-g}$ it also cuts $\cM_{id-j}$, $0< j < g$. 
Hence, there are at least $g$ segments of $G_i$ related to each horizontal 
segment of $\cB_i$ and the total number of segments in all $\cB_i$ 
is $O(\frac{n}{g})$.  
An example of 
a (just constructed) additional staircase is shown on
 Fig.~\ref{fig:stair1}. \\
{\bf Updates.} 
When we update a staircase $\cM_{id+j}$ for $g/2\geq j \geq -g/2$
 by operation $\replace$, the 
staircase is moved in the north-east direction. As a result, 
a point on a staircase $\cM_{id+j}$, $j>0$, may dominate a segment 
of $\cB_i$. Therefore  we maintain a weaker property: 
no segment of $\cM_{id+g}$ 
dominates $\cB_i$ and each point of $\cB_i$ is dominated by a 
point on $\cM_{id-g}$.   Our goal is to update $\cB_i$ 
$O(1)$ times for $\Omega(g)$ updates of $\cM_j$ (in average). 
We achieve this by maintaining the following invariants
\begin{invariant}\label{inv:dom1}
Each segment $s\in \cB_i$ is dominated by  the right endpoint of a segment 
$r\in \cM_{id}$.
\end{invariant}
\begin{invariant}\label{inv:dom2}
No point of $\cB_i$ is dominated by a point of $\cM_{id+g/2}$.
\end{invariant}
\begin{invariant}\label{inv:cut}
No segment $s\in \cB_i$ cuts $\cM_{id-g+1}$.
\end{invariant}
We say that a segment $s$ is empty if it does not cut $\cM_{id-g/2}$.
\begin{invariant}\label{inv:empty}
If a segment $s_2$ follows $s_1$ in $\cB_i$, then 
either $s_2$ or $s_1$ is not empty. 
\end{invariant}
If Invariants~\ref{inv:dom1} and~\ref{inv:cut} are true when $\cB_i$ 
is constructed, they will not be violated after updates of $\cM_{id},\ldots,
\cM_{id-g}$. 
We update $\cB_i$ if Invariants~\ref{inv:dom2} or~\ref{inv:empty}
are violated: If a  segment $s\in \cB_i$, such that $s$ was not empty
when $s$ was inserted into $\cB_i$,  
does not cut $\cM_{id-g/2}$ after an operation $\replace(q,id-g/2)$, 
we call the procedure $\Rectify(\cB_i,s)$ that will be described later in this section. 
If a segment $s$ of $\cB_i$ is dominated 
by a point of $\cM_{id+g/2}$ after $\replace(q,id+g/2)$, 
we also call the procedure $\Rectify(\cB_i,s)$.
\begin{fact}\label{fact:dom1}
If a point $q$ dominates a segment of $\cM_{id-j}$, then $q$ dominates at 
least one segment of $\cM_{id-k}$ for each $k<j$.
\end{fact}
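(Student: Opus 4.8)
The plan is to reduce to the one-step statement — if $q$ dominates a segment of $\cM_{c}$ then $q$ dominates a segment of $\cM_{c+1}$ — and then iterate. This suffices: since $k<j$ we have $id-k>id-j$, so $\cM_{id-k}$ lies below $\cM_{id-j}$ (Fact~\ref{fact:vert}) and is reached from it by the $j-k$ downward steps $\cM_{id-j}\to\cM_{id-j+1}\to\cdots\to\cM_{id-k}$.

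For the one-step statement I would start from a segment $s\in\cM_{c}$ whose left endpoint $(\ell,h)$ is dominated by $q$, so $q.x\ge\ell$ and $q.y\ge h$. Two structural facts are available: (i) consecutive staircases do not cross, so on any vertical line the point of $\cM_{c+1}$ has strictly smaller $y$-coordinate than the point of $\cM_{c}$ (Fact~\ref{fact:vert}); and (ii) the staircase region bounded by $\cM_{c+1}$ is contained in the one bounded by $\cM_{c}$, because each point of $\cM_{c+1}$ is the projection of a layer-$(c+1)$ point, which is dominated in all three coordinates by some layer-$c$ point lying above the sweep plane, whose projection therefore lies in the region of $\cM_c$.

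The main case is when $\cM_{c+1}$ is defined at abscissa $\ell$. Let $s'$ be the segment of $\cM_{c+1}$ whose $x$-interval contains $\ell$, with $\ell$ not its right endpoint; on $s'$ the staircase has height $y(s')$. Evaluating fact~(i) on a vertical line slightly to the right of $\ell$ — where $\cM_c$ still has height exactly $h$, since $(\ell,h)$ is the \emph{left} endpoint of $s$ — gives $y(s')<h$. Together with $\sleft(s')\le\ell\le q.x$ and $y(s')<h\le q.y$ this says that $q$ dominates the left endpoint $(\sleft(s'),y(s'))$ of $s'$, which is what we want. The borderline situation where $\ell$ is the right endpoint of the last segment of $\cM_{c+1}$ is handled identically, taking the point of $\cM_c$ on the line $x=\ell$ to be $(\ell,h)$.

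The hard part, which I expect to need the most care, is the remaining case in which $\cM_{c+1}$ ends strictly to the left of $\ell$. Then $q.x\ge\ell$ lies to the right of the whole of $\cM_{c+1}$, so the left endpoint of every segment of $\cM_{c+1}$ has abscissa at most $q.x$, and it is enough to show that $q.y$ is at least the height $y(t)$ of the lowest (rightmost) segment $t$ of $\cM_{c+1}$ — equivalently, that $y(t)\le h$. I would try to get this from the containment~(ii): the rightmost corner of $\cM_{c+1}$ lies in the region of $\cM_c$, which controls how far down $\cM_{c+1}$ can reach relative to the portion of $\cM_c$ to the right of $\cM_{c+1}$; if that is not enough on its own, I would also use the way the staircases $\cM_{id-g},\dots,\cM_{id}$ belonging to a single $\cB_i$ are interleaved. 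This is the least automatic step of the argument.
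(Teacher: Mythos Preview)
The paper gives no proof of this fact at all; it is stated as self-evident and then used in the proofs of Facts~\ref{fact:dom2} and~\ref{fact:dom3}. Your reduction to the one-step statement together with the main-case argument (apply Fact~\ref{fact:vert} on the vertical line through $\ell$ to the segment $s'$ of $\cM_{c+1}$ covering $\ell$) is exactly the intended intuition, and already more than the paper supplies.

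Your caution about the boundary case is justified. Read literally with the paper's definition ``$q$ dominates a segment $s$'' $=$ ``$q$ dominates the left endpoint of $s$'', the case where the inner staircase $\cM_{id-k}$ ends strictly to the left of $\ell$ is \emph{not} automatic, and the containment argument you sketch will not close it: one can draw valid nested staircases in which $q$ dominates a low, right-hand segment of the outer staircase yet lies strictly below the single (high, short) horizontal segment of the inner one. So the step you flagged as ``least automatic'' is in fact a small gap in the literal statement, not just in your write-up. The fix is to note that in this algorithm every staircase extends down to the $x$-axis, so in the boundary case $q$ still dominates the rightmost point $(m_{id-k}.x,0)$ of $\cM_{id-k}$; the downstream uses (Facts~\ref{fact:dom2}--\ref{fact:dom3}) only need that $q$ lies above $\cM_{id-k}$, which this gives. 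In short: your main case is the whole argument the paper intends, and the edge case you worried about is a wrinkle in the phrasing rather than a hole in the mathematics.
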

\begin{fact}\label{fact:dom2}
If a point $q$ dominates more than two  segments of $\cB_{i}$, then 
$q$ dominates at least one segment of $\cM_{id-j}$ for each 
$j=1,2,\ldots,g/2$. 
\end{fact}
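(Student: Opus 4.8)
The plan is to derive Fact~\ref{fact:dom2} from Fact~\ref{fact:dom1}: it is enough to show that a point $q$ dominating more than two horizontal segments of $\cB_i$ must dominate a segment of $\cM_{id-g/2}$, because Fact~\ref{fact:dom1}, applied with $j=g/2$, then produces a dominated segment in every $\cM_{id-j}$ with $1\le j\le g/2$. For this reduction I would first record an elementary observation: since each staircase is $x$-monotone (Fact~\ref{fact:monot}) and runs from the left edge down to the $x$-axis, a point dominates some horizontal segment of a staircase iff it lies on or above that staircase (take the segment spanning the point's $x$-coordinate, or the last segment if the point lies to its right; its left endpoint is then dominated). So the goal becomes showing that $q$ lies on or above $\cM_{id-g/2}$.

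Next I would identify the three segments to use. The horizontal segments of $\cB_i$ dominated by $q$ form an interval in the left-to-right order along $\cB_i$: the condition $y(s)\le q.y$ cuts out a suffix (heights decrease along a staircase) and $\sleft(s).x\le q.x$ cuts out a prefix. Hence, if $q$ dominates more than two of them, it dominates three \emph{consecutive} horizontal segments $t,t',t''$ of $\cB_i$, with $\sright(t)=\sleft(t')$ and $\sright(t')=\sleft(t'')$, with heights $y(t)>y(t')>y(t'')$ all at most $q.y$, and with $\sleft(t'').x\le q.x$.

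Then I would use the structure of empty segments. By Invariant~\ref{inv:empty} at least one of $t,t'$ is non-empty, i.e.\ cuts $\cM_{id-g/2}$. The crucial sub-claim is that a non-empty segment $s$ of $\cB_i$ drives $\cM_{id-g/2}$ down to height $\le y(s)$ no later than $x=\sright(s).x$ (and, being non-increasing, it remains there): by Invariant~\ref{inv:dom1} the left endpoint of $s$ is dominated by a point of $\cM_{id}$, hence lies on or below $\cM_{id}$ and so below the higher staircase $\cM_{id-g/2}$ (Fact~\ref{fact:vert}); since $s$ also meets $\cM_{id-g/2}$, that staircase must fall to height $\le y(s)$ somewhere in $[\sleft(s).x,\sright(s).x]$. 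Granting this: if $t$ is non-empty, then $q.x\ge\sleft(t').x=\sright(t).x$ forces the height of $\cM_{id-g/2}$ at $q.x$ to be $\le y(t)\le q.y$; if $t$ is empty, then $t'$ is non-empty and $q.x\ge\sleft(t'').x=\sright(t').x$ forces that height to be $\le y(t')\le q.y$. In either case $q$ lies on or above $\cM_{id-g/2}$, which completes the reduction and the proof.

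The step I expect to be the main obstacle is this sub-claim about non-empty segments: it requires the precise meaning of ``$s$ cuts $\cM_{id-g/2}$'' and some care with ties among segment heights, so that ``$\cM_{id-g/2}$ is at height $\le y(s)$ by $\sright(s).x$'' holds in the boundary configurations created by the $\replace$ updates. The remaining ingredients — the interval structure of the dominated segments, the monotonicity arguments, and the closing appeal to Fact~\ref{fact:dom1} — should be routine.
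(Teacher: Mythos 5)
Your argument is correct and is essentially the paper's own proof: Invariant~\ref{inv:empty} gives a non-empty segment among the (consecutive) dominated segments of $\cB_i$, $q$ dominates its right endpoint, ``cuts $\cM_{id-g/2}$'' then yields a dominated segment of $\cM_{id-g/2}$, and Fact~\ref{fact:dom1} finishes. The only cosmetic difference is your detour through Invariant~\ref{inv:dom1} and Fact~\ref{fact:vert}, which is unnecessary: the definition of ``cuts'' already supplies a segment $r\in\cM_{id-g/2}$ with $\sleft(r)<\sright(s)$ and $y(r)<y(s)$, so the right endpoint of the non-empty segment (and hence $q$) dominates $r$ directly, avoiding any discussion of the height of $\cM_{id-g/2}$ at $q.x$.
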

\begin{proof}
If $q$ dominates three segments of $\cB_i$ then $q$ dominates the right 
endpoint of at least  
one non-empty segment $s\in \cB_i$. Since $s$ cuts $\cM_{id-g/2}$, 
the right endpoint of $s$ dominates a segment of $\cM_{id-g/2}$. 
Hence, the right endpoint of $s$ dominates  at least one segment 
of $\cM_{id-j}$ for $j=1,2,\ldots,g/2-1$ by Fact~\ref{fact:dom1}. 
Since $q$ dominates the right endpoint of $s$, $q$ also dominates 
at least one segment of $\cM_{id-j}$ for each $j=1,2,\ldots,g/2$.
\end{proof}

\begin{fact}\label{fact:dom3}
Any  point $q$ on $\cM_{id+j}$, $j\geq 0$, dominates at most two 
segments of $\cB_i$. 
\end{fact}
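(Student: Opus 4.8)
The plan is to argue by contradiction, combining Fact~\ref{fact:dom2} with the two structural facts about staircases. Assume some point $q$ that lies on a staircase $\cM_{id+j}$ with $j\ge 0$ dominates at least three segments of $\cB_i$. Then Fact~\ref{fact:dom2} applies and tells us that $q$ dominates at least one segment of $\cM_{id-k}$ for every $k\in\{1,\dots,g/2\}$; we need only the case $k=1$, so let $r$ be a segment of $\cM_{id-1}$ dominated by $q$. (The staircase $\cM_{id-1}$ exists whenever $\cB_i$ has been constructed, because the construction of $\cB_i$ cuts $\cM_{id-g}$ and staircases are created in consecutive index order, and $g=d/2\ge 1$.) Let $\ell$ be the left endpoint of $r$; by the definition of a point dominating a segment, $q.x\ge \ell.x$ and $q.y\ge \ell.y$, and $\ell.y=y(r)$.

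From here I would show that this situation forces $q.y<\ell.y$, a contradiction. Since every staircase runs leftward to the line $x=0$, and since $q$ lies on $\cM_{id+j}$ with $q.x\ge \ell.x$, the vertical line $x=\ell.x$ meets $\cM_{id+j}$; let $h$ be the $y$-coordinate of the horizontal piece of $\cM_{id+j}$ over abscissa $\ell.x$, so that $(\ell.x,h)\in\cM_{id+j}$. The same vertical line also meets $\cM_{id-1}$ at $\ell$, and $id-1<id+j$, so Fact~\ref{fact:vert} gives $\ell.y>h$. On the other hand, $(\ell.x,h)$ and $q$ both lie on $\cM_{id+j}$ with $\ell.x\le q.x$, so Fact~\ref{fact:monot} gives $h\ge q.y$. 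Chaining these, $q.y\le h<\ell.y$, which contradicts $q.y\ge \ell.y$. Hence no point of $\cM_{id+j}$ with $j\ge 0$ can dominate three or more segments of $\cB_i$, which is the claim.

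The one place that needs a little care — and which I would expect to be the only real obstacle — is the geometric comparison of $\cM_{id+j}$ and $\cM_{id-1}$ at abscissa $\ell.x$: one must check that $\cM_{id+j}$ really is defined that far to the left (it is, since $q$ sits on it at $x=q.x\ge \ell.x$ and every staircase extends back to $x=0$), and one must keep the inequalities strict through possible corner points of the two staircases at that abscissa. Pinning $h$ to the $y$-value of the horizontal segment of $\cM_{id+j}$ over $x=\ell.x$ and then applying Facts~\ref{fact:vert} and~\ref{fact:monot} to that single witness point $(\ell.x,h)$ avoids those corner ambiguities; everything else is an immediate unpacking of Fact~\ref{fact:dom2}.
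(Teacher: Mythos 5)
Your proof is correct and takes essentially the same route as the paper's: assume $q$ dominates three segments of $\cB_i$, invoke Fact~\ref{fact:dom2}, and derive a contradiction from the fact that a point on a higher-indexed staircase cannot dominate a point on a lower-indexed one. The only difference is cosmetic: the paper first replaces $q$ by a point $q'$ of $\cM_{id}$ above it and contradicts at $\cM_{id-g/2}$, whereas you apply Fact~\ref{fact:dom2} to $q$ directly and make the ``cannot dominate'' step explicit at $\cM_{id-1}$ via Facts~\ref{fact:vert} and~\ref{fact:monot}, which is a slightly more self-contained rendering of the same argument.
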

\begin{proof}
Suppose that a point  $q$ on $\cM_{id+j}$ dominates more than two  segments 
of $\cB_{i}$. Then, there is a point $q'$ on $\cM_{id}$ that 
also dominates more than two  segments 
of $\cB_{i}$. By Fact~\ref{fact:dom2},  $q'$ dominates a segment of 
$\cM_{id-j}$ for each $j=1,2,\ldots,g/2$. 
Since a point on $\cM_{id}$ cannot dominate 
a point on $\cM_{id-g/2}$, we obtain a contradiction.  
\end{proof}
Fact~\ref{fact:dom3}, which is a corollary of  Invariant~\ref{inv:empty},
guarantees us that each operation $\replace(q,id+j)$ such that $q$ dominates
$\cB_i$ affects at most two segments of $\cB_i$. This will be important 
in our analysis of the number of updates of $\cB_i$.
Now we are ready to describe the update procedure.

The procedure $\Rectify(\cB_i,s)$ deletes a segment  $s$ and a number 
of preceding and following segments and replaces them 
with new segments. We say that a segment $s'$ is the child 
of $s$ if $s$ was removed by an operation $\replace(q)$, such 
that $q$ is the right endpoint of $s'$; $s'$ is a \emph{descendant} of $s$ 
if $s'$ is a child of $s$ or a descendant of a child of $s$.
Let $s_0$ be the segment that precedes $s$ in $\cB_i$. 
Let $s_1,s_2,\ldots$ be segments of $\cB_i$ such that $s_1$ follows 
$s$ and $s_i$ follows $s_{i-1}$ for $i>1$.
Suppose that $s$ contained the right endpoint of a segment $r_B\in \cM_{id}$
that belonged to $\cM_{id}$ when $s$ was inserted into $\cB_i$, 
and let $r_u$ be the descendant of $r_B$ that belongs to $\cM_{id}$ 
when the procedure $\Rectify(\cB_i,s)$ is performed. 
By Fact~\ref{fact:dom3}, $r_u$ may dominate the segment $s_0$ that 
precedes $s$ in $\cB_i$, but $r_u$ does not dominate the segment 
that precedes $s_0$ in $\cB_i$. 
Let $r_{\max}$ be a descendant of a segment $r\in \cM_{id}$ related 
to $s$ with the largest $x$-coordinate of its right endpoint. 
By Fact~\ref{fact:dom3}, $r_{\max}$ may dominate $s_1$ and $s_2$
but it cannot dominate $s_3$.

Now we must decide which segments are to be deleted from $\cB_i$ and 
how to construct new segments. We delete segments that are dominated 
by $r_u$ or $r_{\max}$. As shown above, there are at most three 
such segments (except of $s$ itself). If $s_f$, $f\leq 2$, is the last 
segment dominated by $r_{\max}$, we may also remove some segments that
follow $s_f$. But our guarantee is that 
all removed segments $s_{f+1},\ldots s_m$ do not cut $\cM_{id-3g/4}$.
We insert new segments into $\cB_i$ by moving a point $p$ 
in $+x$ and $-y$ directions.  A more detailed description 
follows.

To simplify the description, we will use  set  $\cV_i$ 
that  contains some horizontal 
segments that currently  belong to  $\cM_{id}$ and some segments 
that belonged to $\cM_{id}$ but are already deleted. 
When a staircase $\cB_i$ is constructed, $\cV_i$ contains all 
horizontal segments of $\cM_{id}$. When the procedure $\Rectify(\cB_i,s)$ 
is called, we delete all segments of $\cV_{i}$ dominated by $r_u$ or 
$r_{\max}$ and insert all segments $r\in \cM_{id}$, such that 
$\sright(r_u) \leq  \sleft(r) \leq \sleft(r_{\max})$. Segments 
of $\cV_i$ are used to  ``bound the staircase $\cB_i$ from below'', 
i.e. the left endpoint of each horizontal segment in $\cB_i$ belongs 
to a segment from $\cV_i$.  \\
(1)  
Let $p$ be the point on $\cB_i$ such that $p.y = y(r_u)$. This is the left 
endpoint of the first inserted segment of $\Rectify(\cB_i,s)$.\\
(2) We move $p$ in the $+x$ direction until $p.x=\sright(r_{\max})$ or $p$ 
cuts $\cM_{id-g}$. While  $p.x < \sright(r_{\max})$, we repeat the following 
steps: we move $p$ in the $-y$ direction until it hits ``new'' 
$\cM_{id}$; then, we move $p$ in $+x$ direction until it 
cuts $\cM_{id-g}$. Observe that all horizontal segments inserted in step 2 
cut $\cM_{id-g}$.\\
(3) When  $p.x=\sright(r_{\max})$, we move $p$ in $+x$ direction until 
it cuts $\cM_{id-3g/4}$ and $p.x\geq \sright(r_{\max})+1$. 
Suppose 
that now $\sleft(s_m)< p.x \leq \sright(s_m)$ for some $s_m$ in $\cB_i$. 
We continue 
moving $p$ in $+x$ direction until $p.x=\sright(s_m)$ or $p$ cuts 
$\cM_{id-g}$. Then, we move $p$ in $-y$ direction until $p$ hits 
$\cV_i$\\ 
(4) We insert a new segment $t$ instead of the segment $s_m$. It is possible 
that now $p.y < y(s_m)$.
We move $p$ in $+x$ direction until  $p.x=\sright(s_{m})$ and move $p$ in 
$-y$ direction until $p$ hits $\cV_i$. \\  
(5) Now we must pay attention that  Invariant~\ref{inv:empty} is
 maintained: all inserted segments except of may be the last one are not 
empty. 
Let $t$ denote the last inserted segment, and suppose that both $t$ 
and $s_{m+1}$ are empty. We can replace two empty segments 
either with one empty segment, or one non-empty and one empty segment
as follows. We replace  $t$ with a new segment $t'$: 
the left endpoint of $t'$ coincides with the left endpoint of $t$ and 
either  $\sright(t')=\sright(s_{m+1})$ or  
$t'$ cuts $\cM_{id-g}$. 
If $t'$ cuts $\cM_{id-g}$, we replace $s_{m+1}$ with a new 
segment $t''$ such that $\sleft(t'')=\sright(t')$, $\sright(t'')=\sright(s_{m+1})$, 
and $y(t'')=y(s_{m+1})$. See Fig.~\ref{fig:2empty} for an example of
 step (5).\\
Observe that all but one non-empty segments constructed by 
$\Rectify(\cB_i,s)$ cut $\cM_{id-g}$. The only exception is the segment 
constructed in step (3) that cuts $\cM_{id-3g/4}$. 
We prove in Appendix C 
that the total number of updates of $\cB_i$ 
during the execution of the sweep plane algorithm is $O(\frac{n}{d})$. 

This completes the Proof  of Lemma~\ref{lemma:stair}.

\section*{Appendix C. Analysis of Update Operations for Additional Staircases}
\label{sec:analys}
We will show below  that the data structure $\cD$ is updated $O(\frac{n}{g})$ 
times during the execution of the sweep-plane algorithm.
First, we will estimate the number of deleted segments.
We will estimate the number of insertions in the end of this section.
We assign $c$ credit points to each segment of $\cM_{id-j}$ and 
$3c$ credit points to every segment of $\cM_{id+j}$ for
$j=1,2,\ldots,g$ and $c= 24$.
Insertion of a new segment into $\cB_i$ is free and deletion 
costs $g$ credit points. 

Every time when we perform operation $\replace(q,id+j)$ for 
$j>0$ we distribute the credit points of 
the newly inserted segment $r$ with right endpoint $q$ among several segments
of $\cB_i$. We evenly distribute credits of $r$ among segments 
$s_j\in B$ such that either $q$ dominates $s_j$ or $q.x> \sleft(s_j)$ 
and $q.y \geq y(s_{j+1})$ where $s_{j+1}$ denotes the segment 
that follows $s_j$ in $\cB_i$. By Fact~\ref{fact:dom2} there are at most 
three such segments $s_j$; hence each $s_j$ obtains at least $c$ credits. 
When a segment $r$ of $\cM_{id-j}$, $j\geq 0$, is deleted, 
we assign credits of $r$ to $s\in \cB_i$, such that $r$ is related 
to $s$. 
We say that a segment $s$ is initially non-empty if $s$ cuts 
$\cM_{id-g/2}$ when $s$ is inserted into $\cB_i$. We will 
show below that we can pay $g$ credit points for each deleted segment 
of $\cB_i$ and maintain the following property.
\begin{property}\label{prop:cred}
Every initially non-empty segment $s\in \cB_i$ 
that does not cut $\cM_{id-g+k}$, $k\geq g/4$, 
accumulated at least $c\cdot k$ credit points.\\ 
Every segment $s\in \cB_i$ that is dominated by a point on $\cM_{id+j}$ 
accumulated $j\cdot c$ credit points.
\end{property}
\begin{proof}
Property~\ref{prop:cred} is obviously true for a just constructed 
staircase $\cB_i$.
Suppose that Property~\ref{prop:cred} is true after the procedure 
$\Rectify(\cB_i,s)$ was called for segments of $\cB_i$ $f\geq 0$ times. 
We will 
show that  this property is maintained after the $(f+1)$-th call of the 
procedure $\Rectify(\cB_i,s)$. 
If an initially non-empty segment $s$ does not cut $\cM_{id-g+k'}$ after 
the $f$-th call of $\Rectify$ is completed, then $s$ accumulated $k'c$ credits. 
If the segment $s$ does not cut $\cM_{id-g+k}$ at some point after the 
$f$-th call of $\Rectify$, then at least $k-k'$ segments of $\cM_{id-g+j}$, 
$k' < j\leq k$, that are related to 
 $s$ are already deleted. 
Hence, $s$ accumulated at least $ck'+ c(k-k')=ck$ credits.
Therefore if an initially non-empty  segment $s$ does not cut $\cM_{id-g/2}$, 
then $s$ accumulated $cg/2$ credits. 

If a point of $\cM_{id+k'}$ dominates a segment $s\in \cB_i$ when 
the $f$-th call of the procedure $\Rectify$ is completed, then 
$s$ has $k'\cdot c$ credit points. 
If a point of $\cM_{id+k}$ dominates a segment $s\in \cB_i$, then 
we performed at least one operation $\replace(q,id+j)$ such that 
$q$ dominates $s$ for each $k' <  j \leq k$. 
Hence, $s$ accumulated $kc$ credits. 
Therefore if a segment of $\cB_i$ is dominated by a point on 
$\cM_{id+g/2}$, then $s$ has $cg/2$ credits.

Hence, when we start the procedure $\Rectify(\cB_i,s)$, the segment $s$ 
has $cg/2$ credit points. 
In addition to $s$, we may have to remove segments $s_0,s_1,s_2$ because 
a descendant of some segment $r\in \cM_{id}$, such that $r$ was related 
to $s$ when $s$ was constructed, dominates $s_0$, $s_1$, or $s_2$. 
If segments $s_3,s_4,\ldots s_m$ are removed by $\Rectify(\cB_i,s)$, then 
each non-empty segment among $s_3,\ldots,s_m$ does not cut 
$\cM_{id-3g/4}$. By Property~\ref{prop:cred}, 
every such segment has $cg/4$ credits. 
Since there are at least $(m-4)/2$ non-empty segments among 
$s_3,\ldots, s_m$, we can use $cg(m-4)/4$ 
credits accumulated by non-empty segments 
to remove $s_4,\ldots, s_m$. We use $cg/4=6g$ credits 
 accumulated by $s$ to remove $s$ and to remove 
$s_0$,$s_1$,$s_2$,$s_3$, and $s_{m+1}$ if necessary. 
If a segment $s'$ inserted after the  procedure $\Rectify$  cuts 
only $3g/4$ staircases $\cM_{id+j}$, then we transfer to $s'$ 
 the remaining credit points accumulated by $s$. 
Recall that there is at most one such segment $s'$ that may be inserted during 
step (3) of the update procedure. 
Since $s$ accumulated
at least $cg/2$ credit points and at most $cg/4$ are spent for 
removing the segments $s$, $s_0$, $s_1$, and $s_2$, 
the segment $s'$ obtains at least $cg/4$ credit points 
after the update procedure.

We must also take care of the segment $t$ resp. segments $t'$ and $t''$.
Since it is possible that $y(t)< y(s_m)$, $t$ can be dominated by a point
 of $\cM_{id+j}$ 
for some $j>0$ when it is constructed.
Remaining credit points of segment $s_m$ are transferred 
to $t$ (resp.\ to $t'$);
if $t''$ is constructed, then credit points of $s_{m+1}$ are transferred to 
$t''$. If $t'$ is constructed but $t''$ is not constructed (i.e. 
if $t'$ replaces both $s_m$ and $s_{m+1}$), then 
credits of $s_{m+1}$ are also transferred to $t'$.  
If $t$ is dominated by a point of $\cM_{id+k}$ for some $0<k< g/2$, 
then we performed an operation $\replace(q, id+j)$ for each $j\leq k$, 
such that $q.y \geq y(t)$ and $q.x \geq \sleft(t)$. 
Since $\sleft(t)> \sleft(s_m)$ and $y(t)> y(s_{m+1})$, 
$s_m$ was assigned $c$ credits for each $1\leq j\leq k$. 
Hence, if $t$ is dominated by $\cM_{id+j}$, then $t$ has 
at least $cj$ credit points.   
The same is also true for $t'$ and $t''$.
\end{proof}

We can conclude from Property~\ref{prop:cred}  
that we can always pay $g$ 
credit points for a deleted segment of $\cB_i$; hence, the 
total number of deleted segments is $O(\frac{n}{g})$.


Let $N_f$ be the number of segments in all $\cB_i$ when the algorithm 
is finished. Since for every second segment $s$ in $\cB_i$ there 
are at least $g/2$ segments of $\cM_{id},\ldots,\cM_{id-g}$ related 
to $s$, the total number of segments in $\cB_i$ is $O(n_i/g)$ 
where $n_i$ denotes the total number of segments in 
$\cM_{id},\ldots,\cM_{id-g}$. 
Hence, $N_f=O(\sum n_i/g)=O(n/g)$. 
Clearly $N_i=N_f+N_d$ where $N_i$ is the number of inserted segments 
and $N_d$ is the number of deleted segments.
Hence, $N_i=O(n/g)$ and the total number of inserted and deleted segments 
in all $\cB_i$ is $N_i+N_d=O(n/g)=O(n/d)$.

\section*{Appendix D.  Staircases on $U\times U$ Grid}
\begin{lemma}\label{lemma:slow2}
We can store $n$ horizontal staircase segments with endpoints 
on $U\times U$ grid in a $O(n\log n)$ space data
 structure that answers ray shooting queries in 
$O(\log \log U + (\log \log n)^3)$ time 
and supports operation $\replace(q,i)$  in $O(m\log^3 n(\log\log n)^2)$ time 
where $m$ is the number of segments inserted into and deleted from the 
staircase $\cM_i$, and operation $\new(q)$ in $O(\log^3 n(\log\log n)^2)$ 
time.  
\end{lemma}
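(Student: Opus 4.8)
The plan is to reduce the $U\times U$ problem to the polynomial-universe problem already solved by Lemma~\ref{lemma:slow}, by maintaining a \emph{dynamic} rank-space reduction of the segment endpoints. After an initial lexicographic perturbation we may assume that the points underlying the staircases have pairwise distinct $x$-coordinates and pairwise distinct $y$-coordinates; then every $x$-value (and every $y$-value) is an endpoint of only $O(1)$ of the stored horizontal segments, since it can be a corner of at most one staircase, and at all times only $O(n)$ distinct $x$-values and $O(n)$ distinct $y$-values occur.

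I would keep two order-maintenance (list-labeling) structures, one on the current endpoint $x$-coordinates and one on the current endpoint $y$-coordinates; using the scheme of~\cite{IKR81} (or the worst-case variant~\cite{willard92}, or~\cite{BCDFZ02}) each assigns order-preserving integer labels in a range $[1,n^{c}]$ with $O(\log^{2} n)$ relabelings per insertion or deletion of a value. On top of these labels I would run the structure $\cD'$ of Lemma~\ref{lemma:slow} over the universe $[1,n^{c}]$, stored compressed: at most $O(n)$ leaves of its tree $\cT$ are ever occupied and each stored segment cuts $O(\log n)$ nodes, so only $O(n\log n)$ nodes carry data and the occupied nodes together with their ancestors can be kept in a Patricia-style compressed tree as in~\cite{BKS95}. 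Hence $\cD'$ uses $O(n\log n)$ space, a root-to-leaf path is still traversed in $O(\log\log n)$ binary-search steps, and since $\log\log n^{c}=O(\log\log n)$ a ray-shooting query in $\cD'$ still costs $O((\log\log n)^{3})$.

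To answer a query for a point $q$ with coordinates in $[1,U]$, I would first map $q$ into the reduced universe. Maintain a predecessor structure over $[1,U]$ (e.g.\ a $y$-fast tree~\cite{W83}) holding the current endpoint $x$-values, find the predecessor of $q.x$, and turn it into the $x$-label the query should use, with the usual one-sidedness care (use doubled labels, and when $q.x$ lies strictly between two stored values use the odd label between their labels, as in the parity convention used elsewhere in the paper) so that $q$ and its image hit the same segment; proceed analogously for $q.y$. This costs $O(\log\log U)$, after which one ray-shooting query in $\cD'$ finishes in $O((\log\log n)^{3})$, for $O(\log\log U + (\log\log n)^{3})$ in total. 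For updates it suffices to bound the cost of inserting or deleting one segment, since $\new(q)$ inserts $O(1)$ segments and $\replace(q,i)$ inserts and deletes $m$ of them. Such a step changes $O(1)$ values in each order-maintenance structure, triggering $O(\log^{2}n)$ relabelings; a relabeled value is incident to $O(1)$ segments, each deleted from $\cD'$ and re-inserted under its new label at cost $O(\log n(\log\log n)^{2})$ by Lemma~\ref{lemma:slow}. Including the segment's own insertion and the $O(\log\log U)$ updates of the $[1,U]$ predecessor structures, one segment operation costs $O(\log^{2}n)\cdot O(\log n(\log\log n)^{2})=O(\log^{3}n(\log\log n)^{2})$, which yields the claimed bounds for $\new$ and $\replace$.

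The step I expect to be the main obstacle is this propagation analysis together with the compressed realization of $\cT$: one must verify that the relabelings do not cascade --- this is exactly where distinct coordinates and the ``$O(1)$ segments per value'' observation are essential, since otherwise a single value could be shared by $\Theta(n)$ staircase corners and a relabeling would be far too costly --- and one must store the complete binary tree over the $n^{O(1)}$-size universe in $O(n\log n)$ space while still accessing a node on a search path in $O(1)$ amortized time, so that neither the space nor the $O((\log\log n)^{3})$ query bound degrades. The one-sidedness/parity bookkeeping in the query-to-label map is routine but must be spelled out carefully.
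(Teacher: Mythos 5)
Your proposal is correct and follows essentially the same route as the paper's Appendix D: replace coordinates by order-preserving labels maintained with the technique of~\cite{IKR81,willard92}, re-insert the $O(1)$ segments affected by each of the $O(\log^2 n)$ relabelings into the structure of Lemma~\ref{lemma:slow} at $O(\log n(\log\log n)^2)$ apiece, and answer a query by first translating the query point into label space via an $O(\log\log U)$ predecessor search. The extra details you supply (distinct-coordinate perturbation, the compressed tree for polynomial-range labels, the parity convention for the query map) are refinements of points the paper leaves implicit, not a different argument.
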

\begin{proof}
Instead of storing point coordinates of segment endpoints 
in the data structure, we store 
\emph{labels} of point coordinates: each $x-$ and $y$-coordinate is assigned 
an $x$-label ($y$-label), so that the $x$-label ($y$-label) 
of $q$ is smaller than 
the $x$-label ($y$-label) of $p$ if and only if $q.x < p.x$ ($q.y < p.y$). 
All labels belong to range $[1,O(n)]$ and  are maintained 
using the technique of~\cite{IKR81,willard92}. 
When a new segment is inserted or deleted, $O(\log^2 n)$ labels may
 change, and  we have to delete and re-insert into data 
structures those segments whose labels are 
changed. Since each segment is stored in $O(\log n)$ secondary data 
structures, a deleted/inserted segment leads to $O(\log^3 n)$ updates 
in $\cL_v$ and $\cR_v$. 
Hence the update time is $O(\log^3 n(\log\log n)^2)$. 
The query procedure is exactly the same as in the proof of
 Lemma~\ref{lemma:slow}.
\end{proof}


\begin{thebibliography}{99}
\bibitem{A92} P.K. Agarwal. Personal communication.
\bibitem{AHR98}
S. Alstrup, T. Husfeldt, T Rauhe, {\em Marked Ancestor Problems},
Proc.  FOCS 1998, 534-544.
\bibitem{AGR94}
M. J. Atallah, M. T. Goodrich,  K. Ramaiyer, {\em Biased Finger Trees and 
Three-dimensional Layers of Maxima}, Proc. SoCG 1994, 150-159.
\bibitem{AT07}
A. Andersson, M. Thorup, {\em  Dynamic Ordered Sets with Exponential Search Trees}, J. ACM 54,Article No. 13 (2007).
\bibitem{BKS95}
M. de Berg, M.~ J. van Kreveld, J. Snoeyink,  {\em Two- and Three-Dimensional Point Location in Rectangular Subdivisions}, J. Algorithms 18,  256-277 (1995).
\bibitem{BCDFZ02}
M. A. Bender, R. Cole, E. D. Demaine, M. Farach-Colton, J. Zito,
\newblock {\em Two Simplified Algorithms for Maintaining Order in a List},
\newblock Proc.   ESA 2002, 152-164.
\bibitem{BCL90}
J. L. Bentley, K. L. Clarkson, and D. B. Levine,
{\em  Fast Linear Expected-Time Algorithms for Computing Maxima and
Convex Hulls}, Proc. SODA 1990, 179-187.
\bibitem{BF02}
G. S. Brodal, R. Fagerberg, {\em Cache Oblivious Distribution Sweeping},
Proc.  ICALP 2002, 426-438.
\bibitem{BG04}  
A. L. Buchsbaum, M. T. Goodrich, {\em Three-Dimensional Layers of Maxima},
 Algorithmica 39, 275-286 (2004).
\bibitem{C06}
T. M. Chan, {\em Point Location in o(log n) Time, Voronoi Diagrams in o(n log n) Time, and Other Transdichotomous Results in Computational Geometry},
Proc. FOCS 2006,  333-344.
\bibitem{CLP11}
T. M. Chan, K. Larsen, M. P\v{a}tra\c{s}cu,
{\em Orthogonal Range Searching on the RAM, Revisited},
to be published in SoCG 2011. 
\bibitem{CP07}
T. M. Chan, M. P\v{a}tra\c{s}cu, {\em Voronoi Diagrams in $n2^{o(\sqrt{lg lg n})}$ 
time}, Proc.  STOC 2007, 31-39.
\bibitem{Cl94}
K. L. Clarkson, {\em More Output-Sensitive Geometric Algorithms}, 
 Proc. FOCS 1994, 695-702.
\bibitem{DP07}
E. D. Demaine, M. P\v{a}tra\c{s}cu, {\em Tight Bounds for Dynamic Convex Hull Queries (Again)}, Proc. SoCG 2007, 354-363.
\bibitem{FGT92}
P.G. Franciosa, C. Gaibisso, and M. Talamo, {\em  An Optimal Algorithm for the Maxima Set Problem for Data in Motion},  Proc. CG 1992, 17-21.
\bibitem{GBT84}
H. N. Gabow, J. L. Bentley, R. E. Tarjan, {\em Scaling and Related Techniques for Geometry Problems}, Proc.  STOC 1984, 135-143.
\bibitem{GK09} Y. Giyora, H. Kaplan,
{\em  Optimal Dynamic Vertical Ray Shooting in Rectilinear Planar 
Subdivisions},  ACM Transactions on Algorithms 5, (2009).
\bibitem{G94}
M. J. Golin,
{\em  A Provably Fast Linear-Expected-Time Maxima-Finding Algorithm},
 Algorithmica 11, 501-524 (1994).
\bibitem{H04}
Y. Han, {\em Deterministic Sorting in $O(nloglogn)$ time and linear space}. 
J. Algorithms 50, 96-105 (2004).
\bibitem{IKR81}  A. Itai, A. G. Konheim, M. Rodeh, 
\newblock  {\em A Sparse Table Implementation of Priority Queues},
\newblock   Proc.  ICALP 1981, 417-431.
\bibitem{K94} S. Kapoor, {\em Dynamic Maintenance of Maximas of 2-d Point Sets},  Proc.  SoCG 1994, 140-149.
\bibitem{KS85}
D. G. Kirkpatrick, R. Seidel,{\em Output-Size Sensitive Algorithms for Finding 
Maximal Vectors}, Proc. SoCG 1985,  89-96.
\bibitem{KLP75}
H. T. Kung, F. Luccio, F. P. Preparata, {\em  On Finding the Maxima of a Set of Vectors}, J. ACM 22, 469-476 (1975).
\bibitem{MN90}
K. Mehlhorn, S. N\"aher, {\em Dynamic Fractional Cascading}, 
Algorithmica 5, 215-241 (1990).
\bibitem{O88}
M. H. Overmars, {\em Efficient Data Structures for Range Searching on a Grid}, J. Algorithms 9(2),  254-275 (1988).
\bibitem{P06}
M. P\v{a}tra\c{s}cu, {\em Planar Point Location in Sublogarithmic Time}, 
Proc. FOCS 2006, 325-332.
\bibitem{T79} R. E. Tarjan, 
{\em A Class of Algorithms which Require Nonlinear Time to Maintain Disjoint
 Sets}, 
J. Comput. Syst. Sci. 18(2), 110-127 (1979).
\bibitem{W83}
Dan E. Willard, {\em  Log-Logarithmic Worst-Case Range Queries are Possible in
 Space $\Theta(N)$},
Information  Processing Letters 17(2), 81-84 (1983).
\bibitem{willard92}
D. E. Willard,
\newblock {\em A Density Control Algorithm for Doing Insertions and Deletions in a Sequentially Ordered File in Good Worst-Case Time}, 
\newblock Information and  Computation 97, 150-204 (1992).
\end{thebibliography}
\end{document}